\documentclass[a4paper,UKenglish,cleveref,autoref,thm-restate,nolineno]{socg-lipics-v2021}

\newif\ifarxive
\arxivetrue 

\ifarxive
\pdfoutput=1 
\hideLIPIcs  
\fi

\usepackage{commath}

\DeclareMathOperator{\RR}{\mathbb{R}}

\DeclareMathOperator{\QQ}{\mathbb{Q}}

\DeclareMathOperator{\ZZ}{\mathbb{Z}}

\DeclareMathOperator{\BB}{\mathbb{B}}

\DeclareMathOperator{\BC}{\mathcal{B}}

\DeclareMathOperator{\EC}{\mathcal{E}}
\DeclareMathOperator{\PC}{\mathcal{P}}

\DeclareMathOperator{\AC}{\mathcal{A}}

\DeclareMathOperator{\JC}{\mathcal{J}}
\DeclareMathOperator{\IC}{\mathcal{I}}

\DeclareMathOperator{\KS}{\mathscr{K}}

\DeclareMathOperator{\rank}{rank}

\DeclareMathOperator{\inputsize}{input\:size}
\DeclareMathOperator{\poly}{poly}
\DeclareMathOperator{\polylog}{polylog}

\DeclareMathOperator{\vol}{vol}
\DeclareMathOperator{\diag}{diag}

\DeclareMathOperator{\disc}{disc}
\DeclareMathOperator{\herdisc}{herdisc}

\DeclareMathOperator{\detlb}{detlb}

\DeclareMathOperator{\BUnit}{\mathbf 1}
\DeclareMathOperator{\BZero}{\mathbf 0}
\DeclareMathOperator{\xB}{\mathbf{x}}

\DeclareMathOperator{\supp}{supp}

\newcommand*{\intint}[2][1]{\left\{#1,\, \dots,\, #2\right\}}

{\par\color{red}} 
{} 

{\begin{quote}\color{blue}} 
{\end{quote}} 

{\par\color{green}} 
{} 

{\par\color{blue}} 
{}

{\par\color{blue}} 
{}

\DeclareMathOperator{\DP}{D\!P}

\newcommand\restr[2]{{
  \left.\kern-\nulldelimiterspace 
  #1 
  \vphantom{\big|} 
  \right|_{#2} 
  }}
\theoremstyle{definition}
\newtheorem{problem}{Problem}
\crefname{problem}{problem}{problems}
\Crefname{problem}{Problem}{Problems}
\DeclareMathOperator{\ILPSF}{ILP-SF}
\DeclareMathOperator{\ILPCF}{ILP-CF}

\title{Algorithms for Standard-form ILP Problems via Koml\'os' Discrepancy Setting (Refined $2^{O(k)}$-analysis)} 

\author{Dmitry Gribanov\footnote{Corresponding author}}{Laboratory  LATNA, HSE University, Russian Federation \and T-Technologies, Russian Federation \and \url{https://www.hse.ru/en/org/persons/51820259/}}{dimitry.gribanov@gmail.com}{https://orcid.org/0000-0002-4005-9483}
{Dmitry Gribanov is partially supported in the framework of the Basic Research Program at HSE University (HSE-BR-2025-080).}

\author{Tagir Khayaleyev\footnote{D.~Gribanov and T.~Khayaleyev are listed first, to reflect that they were the primary contributors to this work; the remaining authors are listed alphabetically.}}{Moscow Independent Research Institute of Artificial Intelligence (MIRAI), Russian Federation}{khayaleevt@gmail.com}{}{}

\author{Mikhail Cherniavskii}{
T-Technologies, Russian Federation}{cherniavskii.miu@phystech.edu}{}{}

\author{Maxim Klimenko}{
T-Technologies, Russian Federation}{klimkomx@gmail.com}{}{}

\author{Dmitry Malyshev}{Moscow Independent Research Institute of Artificial Intelligence (MIRAI), Russian Federation \and Lobachevsky State University of Nizhny Novgorod, Russian Federation \and  \url{https://www.hse.ru/en/org/persons/22927140/}}{dsmalyshev@rambler.ru}{https://orcid.org/0000-0001-7529-8233}{
Dmitry Malyshev is partially supported by the Ministry of Science and Higher Education of the Russian Federation (Goszadaniye), project No. FSMG-2024-0025.
}

\author{Stanislav Moiseev}{T-Technologies, Russian Federation}{s.moiseev@t-tech.dev}{https://orcid.org/0009-0007-0599-8999}{}

\authorrunning{D. Gribanov, T. Khayaleyev, M. Cherniavskii, M. Klimenko, D. Malyshev, S. Moiseev} 

\Copyright{Dmitry Gribanov, Tagir Khayaleyev, Mikhail Cherniavskii, Maxim Klimenko, Dmitry Malyshev, Stanislav Moiseev} 

\ccsdesc[300]{Theory of computation~Fixed parameter tractability} 

\keywords{Parameterized complexity, FPT algorithms, Integer linear programming, Koml\'os' conjecture, Discrepancy} 

\category{} 

\relatedversiondetails{Full Version}{https://arxiv.org/abs/2604.09806} 

\acknowledgements{The authors would like to thank Nikita Kasyanov, Alexander Rogozin and Demyan Yarmoshik for their participation in discussions and helpful comments.\\
The initials LD from the definition of \emph{LD-preconditioner} also serve as a quiet tribute to Liza D., who many times made difficult things feel worth doing [T. Khayaleyev].\\
Additionally, the authors would like to thank the anonymous reviewers for their insightful remarks and helpful suggestions, which helped us considerably improve the analysis.}

\EventEditors{Philip Bille, Seth Pettie, and Sabine Storandt}
\EventNoEds{3}
\EventLongTitle{34th Annual European Symposium on Algorithms (ESA 2026)}
\EventShortTitle{ESA 2026}
\EventAcronym{ESA}
\EventYear{2026}
\EventDate{August 31--September 4, 2026}
\EventLocation{L'Aquila, Italy}
\EventLogo{}
\SeriesVolume{388}
\ArticleNo{23}

\begin{document}

\maketitle

\begin{abstract}

    \noindent We study the standard-form ILP problem
\begin{align*}
    c^\top x \to \max\\
    \begin{cases}
        Ax = b,\\
        x \in \ZZ_{\geq 0}^n,
    \end{cases}
\end{align*} where \(A\in\ZZ^{k\times n}\) has full row rank. We obtain refined FPT
algorithms parameterized by \(k\) and \(\Delta\), the maximum absolute
value of a \(k\times k\) minor of \(A\). Our approach combines
discrepancy-based dynamic programming with matrix $\ell_2$-discrepancy bounds in
Koml\'os' setting. 
Up to polynomial factors in the input size, the optimization problem can be
solved in time \(2^{O(k)}\Delta^2\), and the corresponding
feasibility problem in time \(2^{O(k)}\Delta\). 

We would like to point out that the original version of this work, submitted to ESA 2026 and published in \cite{ESA2026_Grib}, contains a slightly weaker result. More precisely, the earlier bounds were $O(\kappa_k)^{2k} \Delta^2$ for optimization and $O(\kappa_k)^{k} \Delta$ for feasibility, where \(\kappa_k\) denotes the maximum $\ell_\infty$-discrepancy in Koml\'os' setting. However, following the advice of one of the anonymous reviewers — to whom we are very grateful — we replaced the $\ell_\infty$-discrepancy with the $\ell_2$-discrepancy in the analysis. This is what allowed us to replace dependencies of the form $\kappa_k^{O(k)}$ with $2^{O(k)}$.

\end{abstract}

\tableofcontents

\section{Introduction}\label{sec:intro}

We consider the following ILP problem:
\begin{definition}
Let $A \in \ZZ^{k\times n}$, $\rank(A) = k$, $c \in \ZZ^n$, $b \in \ZZ^{k}$. 
\emph{An ILP problem in standard form of codimension $k$} is the problem:
\begin{align}
    &c^\top x \to \max \notag\\
    &\begin{cases}
    A x = b\\
    x \in \ZZ_{\geq 0}^n.
    \end{cases}\label{ILP-SF}\tag{\(\ILPSF\)}
\end{align}
\end{definition}

We study the computational complexity of this problem with respect
to $k$ (codimension of the system $A x = b$) and the absolute values of subdeterminants  of $A$. These values are captured by the following definition:
\begin{definition}
For a matrix $A \in \ZZ^{k \times n}$ and $j \in \intint{k}$, by
\[
\Delta_j(A) = \max\left\{\abs{\det (A_{\IC \JC})} \colon \IC \subseteq \intint{k},\, \JC \subseteq \intint{n},\, \abs{\IC} = \abs{\JC} = j\right\},
\]
we denote the maximum absolute value of determinants of all the $j \times j$ submatrices of $A$, with the convention that
\[
\Delta_j(A)=0 \qquad \text{for } j>\min\{k,n\}.
\]
We also use the shorthand notation $\Delta(A) := \Delta_{\rank(A)}(A)$ for the maximum absolute value of a full-rank minor of $A$.
A matrix $A$ with $\Delta(A) \leq D$ is called \emph{$D$-modular}. 
\end{definition}

We will often use the shorthand notation $\Delta := \Delta(A)$ and $\Delta_1 := \Delta_1(A) = \norm{A}_{\max}$.
To the best of our knowledge, the study of integer linear programming problems of bounded codimension began with the classical work of Papadimitriou \cite{Papadimitriou}, which shows that the \eqref{ILP-SF} problem  can be solved in
\begin{equation}\label{eq:papa_complexity}
    O\left(n^{2k + 2} \cdot \bigl(k \cdot\max\{\Delta_1, \norm{b}_{\infty}\} \bigr)^{(k+1)(2k+1)} \right)
\end{equation}
arithmetic operations. Following the analysis in \cite[Section~18.6]{Schrijver}, this complexity bound can be rewritten in terms of subdeterminants of $A_{\text{ext}} = \bigl(A\; b\bigr)$:
\begin{equation}\label{eq:papa_complexity_reduced}
    n^{O(k)} \cdot D^{O(k)}.
\end{equation}
Here, $D$ denotes the maximum absolute value of subdeterminants of the extended matrix $A_{\text{ext}}$ of all orders up to $k$.

Four decades later, Eisenbrand \& Weismantel \cite{SteinitzILP} improved this bound by showing that \eqref{ILP-SF} and its feasibility variant can be solved in time
\begin{gather}
    n \cdot O(k \Delta_1)^{2k} \cdot \norm{b}_1^2,\label{eq:steinitz_complexity}\\
    n \cdot O(k \Delta_1)^{k} \cdot \norm{b}_1,\label{eq:steinitz_feasibility_complexity}
\end{gather}
respectively. Jansen \& Rohwedder \cite{DiscConvILP} later sharpened these bounds and removed dependence on $\norm{b}$, obtaining
\begin{gather}
    O(k)^k \cdot \Delta_1^{2k}/2^{\Omega(\sqrt{\log \Delta_1})} + T_{\text{LP}},\label{eq:JR_complexity}\\
    O(k)^{k/2} \cdot \Delta_1^{k} \cdot (\log \Delta_1)^2 + T_{\text{LP}}, \label{eq:JR_feasibility_complexity}
\end{gather}
for \eqref{ILP-SF} and its feasibility variant, respectively.
Here $T_{\text{LP}}$ denotes the computational complexity of solving the relaxed LP problem. It was also shown that these complexity bounds are conditionally optimal with respect to $\Delta_1$.

It is natural to investigate the computational complexity of the \eqref{ILP-SF} problem with respect to subdeterminants of orders other than $1$ in more detail, thereby extending Papadimitriou's bound in \eqref{eq:papa_complexity_reduced}. In this direction, Gribanov et al. \cite[Corollary~9]{OnCanonicalProblems_Grib} showed that the \eqref{ILP-SF} problem can be solved using  
\begin{equation}\label{eq:canoncail_work_complexity}
    O(\log k)^{2k^2 + o(k^2)} \cdot \Delta^2 \cdot \log^2(\Delta) + T_{\text{LP}}
\end{equation}
operations. However, the complexity dependence on $k$ appears overly pessimistic.

Computational complexity bounds with significantly better dependence on $k$ were obtained by Cherniavskii et al. \cite{DiscConvILP_Extension} for the \eqref{ILP-SF} problem and its feasibility variant:
\begin{gather}
    O(\log k)^{2k} \cdot \Delta^2/2^{\Omega(\sqrt{\log \Delta})} + 2^{O(k)} \cdot \poly(\inputsize),\label{eq:CGMP_complexity}\\
    O(\log k)^{k} \cdot \Delta \cdot (\log \Delta)^2 + 2^{O(k)} \cdot \poly(\inputsize).\label{eq:CGMP_feasibility_complexity}
\end{gather}
The computational complexity bounds \eqref{eq:CGMP_complexity} and \eqref{eq:CGMP_feasibility_complexity} are also applicable to problems in canonical form with $(n+k)$ inequalities (see \ref{ILP-CF} and \Cref{sec:canonical_ILP}).

\begin{remark}[Conditional Lower Bounds]
    Note that applying Hadamard's inequality to these complexity bounds yields a computational cost that almost matches \eqref{eq:JR_complexity} and \eqref{eq:JR_feasibility_complexity}. The latter fact implies that these complexity bounds are conditionally optimal with respect to $\Delta$ (for the discussion, see Section ``Conditional Lower Bounds'' of \cite{DiscConvILP_Extension}). 

    Additionally,  we  note  that,  due  to Bock et al. \cite{StableSetHardness}, there  are  no  polynomial-time  algorithms  for  the ILP problems with $\Delta = \Omega(n^\varepsilon)$, for any $\varepsilon > 0$, unless  the ETH (the Exponential Time Hypothesis) is false. 
    
    The last fact is the reason why we need to use both parameters $k$ and $\Delta$. Indeed, due to \cite{StableSetHardness}, the complexity bound $\poly(\Delta, \inputsize)$ is unlikely to exist, while the bounds of the form $2^{O(k)} \cdot \poly(\Delta, \inputsize)$ are provided in our work.
\end{remark}

\subsection{Our Contribution}

Our main result is an algorithm for $\Delta$-modular instances with $2^{O(k)}$ dependence on $k$, at the cost of only a negligible deterioration in the dependence on $\Delta$ with respect to \eqref{eq:CGMP_complexity}. 

\begin{restatable}{theorem}{ThKomlosILP}\label{th:KomlosILP}
    The \eqref{ILP-SF} problem can be solved in
    \begin{equation*}
        C^{2k + o(k)} \cdot \Delta^2 \cdot \log \Delta + \poly(\inputsize)
    \end{equation*}
    arithmetic operations, where $C = 24 e \sqrt{2 \pi} \thickapprox 163.529$. Its feasibility variant can be solved in
    \begin{equation*}
        C^{k + o(k)} \cdot \Delta \cdot \poly(\inputsize)
    \end{equation*}
    bit operations\footnote{Note that for the feasibility problem we estimate the bit complexity, because it is easier to do so for several reasons. The arithmetic complexity can be analyzed in further work.}.
\end{restatable}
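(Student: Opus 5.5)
We follow the discrepancy-based dynamic programming scheme of Jansen \& Rohwedder \cite{DiscConvILP}, as extended by Cherniavskii et al. \cite{DiscConvILP_Extension}. The new ingredient is that we measure the proximity of partial solutions in the $\ell_2$-norm, using Komlós-type matrix discrepancy, instead of the $\ell_\infty$-norm.

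\textbf{Step 1: reduction to a preconditioned instance.} Solve the LP relaxation and pick an optimal basis $B$. A standard proximity argument, valid for the \eqref{ILP-SF} problem with $\Delta$-modular $A$, lets us fix most variables and reduce $\|b\|$ in terms of $\Delta$; this costs $\poly(\inputsize)$. Next, multiply the system by a suitable preconditioner (the LD-preconditioner mentioned in the acknowledgements), essentially $B^{-1}$ scaled by $\Delta$. After this change of coordinates the columns $a_i$ of the new matrix satisfy $\|a_i\|_2 \le 1$, up to normalization. The set of right-hand sides reachable by partial solutions then lies in a lattice of determinant about $\Delta$, sitting inside a scaled $\ell_2$-ball.

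\textbf{Step 2: discrepancy recursion.} Write an optimal solution $x$ as a multiset of columns and halve it recursively. At each level, a signing of the column multiset given by an $\ell_2$-discrepancy bound in Komlós' setting splits the current right-hand side $b'$ into two halves, each equal to $b'/2$ up to an error vector of $\ell_2$-norm $O(1)$. The bound needed here is that for column vectors of $\ell_2$-norm at most $1$ there is a signing whose sum has $\ell_2$-norm $O(\sqrt{k})$, or better in the form we need. Following this over all $O(\log \|b\|)$ levels, at level $i$ every intermediate right-hand side lies in $b/2^i + \mathcal{E}$, where $\mathcal{E}$ is a Euclidean ball of constant radius $r$ in the preconditioned coordinates.

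\textbf{Step 3: counting states.} The states at each level are the lattice points of $\mathcal{E}$ in a lattice of determinant $\Delta$. Their number is at most
\[
\frac{\vol(\text{ball of radius } r + \sqrt{k}/2)}{\det}\cdot \Delta \approx \frac{\pi^{k/2}(c\sqrt{k})^k}{\Gamma(k/2+1)}\,\Delta .
\]
By Stirling's formula this is $(c'\sqrt{2\pi e})^{k+o(k)}\Delta$, which is where the constant $C = 24e\sqrt{2\pi}$ should come from after careful bookkeeping of $r$. It is crucial that the $\ell_2$-ball has volume only $2^{O(k)}$ relative to $k^{k/2}$. With $\ell_\infty$ the box would have side $\kappa_k$, giving the older $\kappa_k^{k}$ factor.

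\textbf{Step 4: combining the levels.} Adjacent levels are joined by a $(\max,+)$-convolution over the state sets. For optimization, each level costs (number of states)$^2$, which gives $C^{2k+o(k)}\Delta^2$; the $\log\Delta$ factor counts the levels, since after Step 1 the norm $\|b\|$ is bounded polynomially in $\Delta$ and $2^k$. For feasibility we use Boolean convolution via FFT over the group $\ZZ^k/\Lambda$, of size about $\Delta$, restricted to the ball. This is near-linear in the number of states, which gives $C^{k+o(k)}\Delta\cdot\poly(\inputsize)$ bit operations.

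\textbf{Main obstacle.} The hard part is Step 2 combined with Step 3. We must ensure that the accumulated $\ell_2$ error stays within a ball of radius $O(\sqrt{k})$ in coordinates where the lattice determinant is exactly $\Delta$, without losing extra $k^{O(k)}$ factors in the preconditioning. The plan is to use the geometric decay $\sum_i 2^{-i}$ of the rounding errors to keep the radius bounded. We also need to track the Stirling constants exactly in order to get the stated $C$.
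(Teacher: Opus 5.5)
\paragraph{Where the proposal breaks.} Your dynamic program matches the paper's Theorem~\ref{th:DP}: halving via $\ell_2$ hereditary discrepancy $O(\sqrt{k})$, geometric decay of errors so that level-$j$ states lie in $b/2^j + O(\sqrt{k})\cdot B\BB_2^k$, a quadratic $(\max,+)$ scan per level, and $O(\log(k\Delta))$ levels from proximity. The genuine gap is Step~1, which is exactly where the $2^{O(k)}$ must be earned.

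Preconditioning by $B^{-1}$ for a basis $B$ of $A$ is not enough. Even for a maximum-volume basis, which you cannot compute exactly anyway, Cramer's rule only gives columns with entries in $[-1,1]$, i.e.\ Euclidean norm up to $\sqrt{k}$. Your ``up to normalization'' therefore hides a factor $\sqrt{k}$ in the radius, hence $k^{k/2}$ in the number of states. For example, $A=(I_k\;\BUnit)$ has $\Delta=1$ and $B=I_k$ is a maximum-volume basis. Yet the column $\BUnit$ has norm $\sqrt{k}$, so your ball has radius $\Theta(k)$ and contains $k^{\Theta(k)}$ integer points.

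The missing idea is to take $B$ from the L\"owner ellipsoid of $\{\pm a_i\}$. With $B^\top B=W^{-1}$ you get $\|B^{-1}a_i\|_2\le 1$ for every $i$. The dual program, Cauchy--Binet, and the inequality of Lemma~\ref{lm:c_opt} then give $|\det B|\le e^{k/2}\Delta$ (Lemma~\ref{lm:ell_det}). This is the source of the $\sqrt{e}$ in $C=6\sqrt{2\pi e}\cdot 4\cdot\sqrt{e}$, and your plan has no counterpart to it. The geometric decay $\sum_i 2^{-i}$ you invoke for the ``main obstacle'' only controls error accumulation across levels. It does nothing about the preconditioning loss.

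\paragraph{Consequences for Steps~3 and~4.} Once $B$ comes from an ellipsoid it is no longer integral, and your count in Step~3 breaks. The bound $\Delta\cdot\vol\bigl((r+\sqrt{k}/2)\BB_2^k\bigr)$ holds when $B^{-1}\ZZ^k$ is a union of $|\det B|$ cosets of $\ZZ^k$, i.e.\ for integral $B$. A thin rational ellipsoid $t+rB\BB_2^k$ can contain far more integer points than its volume. Moreover, the DP needs an actual enumeration procedure, not just a count.

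The paper handles this through integral regularity. It computes the ellipsoid for $A_{\BC}^{-1}A$, which contains $I_k$; this forces the triangular factor to have diagonal entries $\ge 1$. It then multiplies back by $A_{\BC}$ and re-triangularizes via the HNF. Finally, it covers $r\BB_2^k$ by $2^k$ unit cubes per integer point and enumerates each parallelepiped $t+B[0,1)^k$ in $O(k^2)\,2^k|\det B|$ operations (Propositions~\ref{prop:LD_enum_par} and~\ref{prop:LD_enum_Ell}).

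Your feasibility step is also unjustified as stated. A group $\ZZ^k/\Lambda$ of order about $\Delta$ cannot separate the $C^{k}\Delta$ states of one level, so working with residues produces false positives. You would need either a larger group together with an injectivity argument, or, as the paper does, output-sensitive sparse Boolean convolution (Lemma~\ref{lm:bool_conv}).
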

\noindent The proof is given in \Cref{sec:proof:KomlosILP}. 

Recall that a matrix is said to lie in Koml\'os' setting if the Euclidean norm of each of its columns is at most one. Our algorithm builds on three well-known theories or lines of research: matrix discrepancy in Komlós' setting, a modified version of the dynamic programming scheme due to Jensen \& Rohwedder, and tools from the geometry of numbers for enumerating integer points in convex bodies. These three ideas are unified by a new concept that we introduce, the notion of an \emph{LD-preconditioner} --- a special multiplicative matrix factor that improves the discrepancy properties of a given matrix by bringing it into Koml\'os' setting, and simultaneously enables efficient enumeration of integer points in the convex body associated with it. For the optimized feasibility algorithm, we use fast sparse Boolean convolution algorithms. The manuscript is organized so that these ideas are unfolded gradually in the relevant sections.

\begin{remark}[Randomized vs.\ deterministic]
    Strictly speaking, the feasibility part of our algorithm is randomized because of the Boolean convolution subroutine from \Cref{sec:bool_conv}. More precisely, in \Cref{lm:bool_conv} we use the randomized sparse polynomial multiplication algorithm of Giorgi et al.~\cite{EssentiallyOptSparsePolyMult}. All remaining ingredients of the feasibility algorithm are deterministic.

    This source of randomness is not essential. As explained in \Cref{rem:bool_conv_det} from Appendix, one may replace that subroutine by the deterministic sparse nonnegative convolution algorithm of Bringmann, Fischer, and Nakos~\cite[Theorem~1]{DetLasVegasSparseNonnegativeConv}. This yields a deterministic implementation of the Boolean convolution routine and therefore a deterministic version of the feasibility algorithm as well. The asymptotic form of the feasibility bounds in \Cref{th:KomlosILP} remains unchanged.
\end{remark}

\begin{remark}[The distinction between this improved work and its original version]
    We would like to point out that the original version of this work, submitted to ESA 2026 and published in \cite{ESA2026_Grib}, contains a slightly weaker result. More precisely, the earlier bounds were $O(\kappa_k)^{2k} \Delta^2$ for optimization and $O(\kappa_k)^{k} \Delta$ for feasibility, where \(\kappa_k\) denotes the maximum $\ell_\infty$-discrepancy in Koml\'os' setting. The well-known Koml\'os conjecture asserts that $\kappa_k = O(1)$. Consequently, if the conjecture were true, it would imply the existence of algorithms running in time $2^{O(k)} \Delta^2$ and $2^{O(k)} \Delta$, respectively. However, following the advice of one of the anonymous reviewers — to whom we are very grateful — we replaced the $\ell_\infty$-discrepancy with the $\ell_2$-discrepancy in the analysis. This is what allowed us to replace dependencies of the form $\kappa_k^{O(k)}$ with $2^{O(k)}$.

    We also note a slight terminological change between the works. In \cite{ESA2026_Grib}, the notion of a \emph{normalized LD-preconditioner} was introduced; in the present work, it is renamed to \emph{regular LD-preconditioner}. We found the word ``regular'' to be more appropriate. In addition, in this version we have introduced the notion of the signature of an LD-preconditioner, which allows us to estimate the associated numerical parameters more accurately. This was necessary for a more precise determination of the exponential order in $2^{O(k)}$.
\end{remark}

\ifarxive

\begin{remark}[Key innovations with respect to previous works in the series]
    Our work is the third in a series of papers devoted to obtaining complexity results for ILP problems parameterized by $(k,\Delta)$. This series builds on the work of Jensen \& Rohwedder~\cite{DiscConvILP}, which developed a dynamic programming approach based on discrepancy properties of matrices and fast convolution algorithms, and which used the pair $(k,\Delta_1)$ as the main parameters. 
    
    In our first work in the series~\cite{OnCanonicalProblems_Grib}, we already employed a fairly weak version of preconditioning and Spencer's discrepancy bound \cite{SixDeviations_Spencer}. However, the main goal of~\cite{OnCanonicalProblems_Grib} was to relate problems in canonical and standard forms, showing that parametric complexity results for ILP in canonical form can also be parameterized by the pair $(k,\Delta)$ (see \Cref{sec:canonical_ILP} for the discussion).

    Our second work~\cite{DiscConvILP_Extension} substantially advanced the ideas of~\cite{OnCanonicalProblems_Grib} by using a more refined discrepancy bound depending on the parameter $\detlb(A)$. To exploit this bound, they proposed preconditioning by the largest minor of $A$, or by an approximation thereof. Moreover, \cite{DiscConvILP_Extension} introduced a powerful convolution tool for finite Abelian groups, which is particularly important for ILP in canonical form.

    In the present work, we develop a considerably more sophisticated approach to constructing a preconditioner for the original matrix $A$, which allows us to use stronger discrepancy bounds, such as those in Koml\'os' setting. In an extended version of this work, we plan to incorporate the techniques of \cite{DiscConvILP_Extension} in order to cover the canonical form as well and to improve the optimization complexity from $\Delta^2$ to $\Delta^2 / 2^{\Omega(\sqrt{\Delta})}$.
\end{remark}

\fi

\subsection{ILP Problems in Canonical Form}\label{sec:canonical_ILP}

It is also natural to consider the ILP problem in canonical form with $n$ variables and $n+k$ constraints.
\begin{definition}
Let $A \in \ZZ^{(n + k)\times n}$, $\rank(A) = n$, $c \in \ZZ^n$, $b \in \ZZ^{n+k}$. \emph{The ILP problem in canonical form with $n+k$ constraints} is defined as follows:
\begin{align}
    &c^\top x \to \max \notag\\
    &\begin{cases}
    A x \leq b\\
    x \in \ZZ^n.
    \end{cases}\label{ILP-CF}\tag{\(\ILPCF\)}
\end{align}
\end{definition}
Despite the fact that the \eqref{ILP-SF} and \eqref{ILP-CF} problems are mutually transformable, all known trivial transformations alter at least one of the key parameters $k$, $\Delta$ or the dimension of the corresponding polyhedra. The existence of a parameter-preserving transformation is a nontrivial question that was resolved positively by Gribanov et al. \cite[Lemmas 4 and 5]{OnCanonicalProblems_Grib}.

As already noted, the complexity bounds \eqref{eq:CGMP_complexity} and \eqref{eq:CGMP_feasibility_complexity} are also applicable to the \eqref{ILP-CF} problem and its feasibility version. This relies precisely on the mentioned transformation between the canonical-form problem \eqref{ILP-CF} and the standard-form problem \eqref{ILP-SF}.


The approach described in this manuscript can also be applied to the \eqref{ILP-CF} problem, which yields a computational complexity bound with a better dependence on $k$ and a negligible deterioration in $\Delta$.
\begin{theorem}\label{th:KomlosILP_feasibility}
    The \eqref{ILP-CF} problem can be solved in
    \begin{equation*}
        C^{2k + o(k)} \cdot \Delta^2 \cdot \log \Delta + \poly(\inputsize)
    \end{equation*}
    arithmetic operations, where $C = 24 e \sqrt{2 \pi} \thickapprox 163.529$.
\end{theorem}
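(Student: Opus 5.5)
The plan is to reduce \eqref{ILP-CF} to an instance of \eqref{ILP-SF} that has the same codimension $k$ and the same parameter $\Delta$, and then apply \Cref{th:KomlosILP}. For this reduction we use the parameter-preserving transformation of Gribanov et al.\ \cite[Lemmas 4 and 5]{OnCanonicalProblems_Grib}, exactly as it was used to carry \eqref{eq:CGMP_complexity} over to the canonical form in \cite{DiscConvILP_Extension}.

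\textbf{Step 1: reduction to a unimodular-type configuration.} Take $A\in\ZZ^{(n+k)\times n}$ with $\rank A = n$. Compute, in polynomial time, a nondegenerate $n\times n$ submatrix $B$ of $A$. By reordering rows we may assume $A=\binom{B}{N}$. Substitute $y = b_B - Bx \ge 0$. Then $x = B^{-1}(b_B - y)$, and the remaining constraints become $N B^{-1} y \ge N B^{-1} b_B - b_N$. The integrality condition $x\in\ZZ^n$ turns into the congruence $y \equiv b_B \pmod{B\ZZ^n}$.

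\textbf{Step 2: encode the congruence by $k$-row equalities.} Following \cite{OnCanonicalProblems_Grib}, we do not encode the congruence through the full Smith form, since that would add rows. Instead, write the system as $\bigl(I_k\;\; {-}NB^{-1}\bigr)$ acting on the slack vector $(s,y)$ with $s\ge 0$. Scaling by an integer matrix, this gives an integer equality system $\hat A z = \hat b$ with $z\in\ZZ^{n+k}_{\ge 0}$ and $k$ rows. The required properties are $\rank \hat A = k$ and $\Delta(\hat A)=\Delta(A)$ (possibly up to a polynomially computable factor that \cite{OnCanonicalProblems_Grib} shows equals $1$). The underlying fact is that the maximal minors of $\hat A$ correspond bijectively to the $n\times n$ minors of $A$ via complementary index sets, by the Jacobi/Plücker duality between a matrix and its orthogonal complement. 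Lemmas 4 and 5 of \cite{OnCanonicalProblems_Grib} provide exactly this integral version, including the bijection of integer solutions, so the objective $c^\top x$ becomes an affine integer functional of $z$.

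\textbf{Step 3: apply \Cref{th:KomlosILP}.} The new instance has codimension $k$, its full-rank minors are bounded by $\Delta$, and its input size is polynomial in the original one. \Cref{th:KomlosILP} therefore yields $C^{2k+o(k)}\Delta^2\log\Delta + \poly(\inputsize)$ arithmetic operations. Unboundedness and infeasibility also transfer through the bijection.

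The main obstacle is Step 2. One must check that the cited transformation preserves $\Delta$ exactly rather than only up to a factor such as $\Delta^2$, and that it keeps the number of rows at $k$ even though the congruence modulo $B\ZZ^n$ has to be absorbed. If that lemma only gives an equivalence modulo a lattice, the fallback is to choose $B$ as a maximal-determinant (or approximately maximal) basis. This makes $NB^{-1}$ bounded, and the group $\ZZ^n/B\ZZ^n$ has order at most $\Delta$, so the lattice condition can be handled in the dynamic program of \Cref{th:KomlosILP} over a group of size at most $\Delta$ without changing the asymptotics.
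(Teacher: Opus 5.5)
Your Step~2 is where the argument fails. Integrality of $x$ means that the slack vector $y=b-Ax$ lies in the coset $b+A\ZZ^n$. The lattice $A\ZZ^n$ is in general not saturated in $\ZZ^{n+k}$: its index in $\mathrm{span}(A)\cap\ZZ^{n+k}$ is the gcd $g$ of the $n\times n$ minors of $A$. A $k$-row system $A^{\perp}y=A^{\perp}b$, with the rows of $A^\perp$ a basis of the integer orthogonal complement, encodes only membership in the saturation. What remains is a congruence in a finite Abelian group $G$ with $\abs{G}=g$, and correspondingly $\Delta(A^\perp)=\Delta(A)/g$. So the ``factor'' you hope equals $1$ is exactly the order of the constraint you dropped. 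For instance, if $A$ has rows $(1,1),(1,-1),(-1,-1)$, then $g=2$ and $A^\perp=(1,0,1)$, and the parity condition $y_1-y_2\equiv b_1-b_2 \pmod 2$, which is needed to recover an integral $x$, is lost.

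The reduction of \cite[Lemmas 4 and 5]{OnCanonicalProblems_Grib} relates \eqref{ILP-CF} to a \emph{generalized} standard form that carries such a group constraint. It is not a solution-preserving bijection with a pure \eqref{ILP-SF} instance. This is why the paper's sketch adapts the dynamic program and uses the finite-group convolution approach of \cite{DiscConvILP_Extension}. It is also why the paper says the feasibility version of \eqref{ILP-CF} would need a more complicated Boolean convolution; that version would come for free if your Step~2 held. Hence \Cref{th:KomlosILP} cannot be applied as a black box.

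Your fallback points in the right direction, but the reason you give (a group of size at most $\Delta$) does not yield the bound. A group coordinate multiplies the number of states per level by $\abs{G}$. With only $\abs{G}\le\Delta$ and a $\Delta$-modular equality part, you get on the order of $C^k\Delta^2$ states, and the quadratic transition then costs $\Delta^4$. What you need to prove is the multiplicative splitting $\Delta(\hat A)\cdot\abs{G}=\Delta(A)$. Examples are $\Delta(A^\perp)\cdot g=\Delta(A)$, or, in your own formulation, $\Delta\bigl((I_k\;\,{-NB^{-1}})\bigr)=\Delta(A)/\abs{\det B}$ with $\abs{G}=\abs{\det B}$; the latter holds for \emph{every} basis $B$ by complementary minors. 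Given this, the LD-preconditioner of the equality part has determinant at most $e^{k/2+o(k)}\Delta(A)/\abs{G}$. The total number of states per level is then $C^{k+o(k)}\Delta(A)$, and the quadratic transition gives the claimed bound.

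In particular, you do not need a maximum-determinant basis. This matters, because computing one is NP-hard in general and brute force costs $n^{O(k)}$. Finally, you must verify the extension of \Cref{th:DP} rather than assert it:
\begin{itemize}
\item the group labels of $z$ and $x-z$ must add up in the splitting step;
\item your equality matrix is rational rather than integral;
\item for a fixed group element, the admissible right-hand sides form a lattice translate inside the ellipsoid, which you handle through the translation vector in \Cref{def:enumerable_matrix}.
\end{itemize}
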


We do not present an accelerated algorithm for the feasibility version of \eqref{ILP-CF}, since this would require a more complicated generalized Boolean convolution that is not needed for \eqref{ILP-SF}. Since \eqref{ILP-CF} plays only a secondary role in this paper, we also omit the proof of \Cref{th:KomlosILP_feasibility}. That proof can be obtained by a minor adaptation of the dynamic program used in \Cref{th:KomlosILP}, together with the approach of \cite{DiscConvILP_Extension} and the reduction of \cite[Lemmas 4 and 5]{OnCanonicalProblems_Grib}. We plan to resolve these convolution issues and develop algorithms for the \eqref{ILP-CF} problem and its feasibility variant in an extended version of our work.

\subsection{Complexity Model Assumptions}\label{assumptions_subs}
Some of the algorithms considered in this paper are analyzed in the Word-RAM model. In other words, we assume that additions, subtractions, multiplications, and divisions of rational numbers of the prescribed size (the word size) can be performed in $O(1)$ time. We take the word size to be bounded by a fixed polynomial in $\lceil \log n \rceil + k + \lceil \log \alpha \rceil$, where $\alpha$ is the maximum absolute value of elements of $A$, $b$, and $c$ in the problem formulations. 

However, sometimes we will explicitly describe the computational complexity of algorithms in terms of the number of bit operations. Depending on which complexity measure we use in our work, we will state this explicitly. That is, we will use the terms "arithmetic operations" (sometimes shortened to "operations") and "bit operations".

\section{Preliminaries}\label{sec:prelim}

For $A \in \RR^{k \times n}$, we denote by:
\begin{itemize}
    \item $A_{i\,j}$ the $(i,j)$-th entry of $A$;
    \item $A_{i\,*}$ its $i$-th row vector;
    \item $A_{*\,j}$ its $j$-th column vector;
    \item $A_{\IC \JC}$ the submatrix consisting of rows and columns of $A$, indexed by $\IC$ and $\JC$, respectively;
    \item Replacing $\IC$ or $\JC$ with $*$ selects all rows or columns, respectively;
    \item When unambiguous, we abbreviate $A_{\IC*}$ as $A_{\IC}$ and $A_{*\JC}$ as $A_{\JC}$.
\end{itemize}

\subsection{Hermite Normal Form}\label{sec:HNF}

Let $A \in \QQ^{n \times n}$ be a rational matrix of rank $n$. It is well known (see, for example, \cite{Schrijver,HNFOptAlg,StorjohannDoc}) that there exists a unimodular matrix $U \in \ZZ^{n \times n}$, such that $A = U H$, where $H \in \QQ^{n \times n}$ is upper triangular and satisfies $0 \le H_{ij} < H_{ii}$ for every $i\in\intint n$ and $j\in\intint{i-1}$. The matrix $H$ is called the \emph{Hermite Normal Form} (or, shortly, HNF) of the matrix $A$. Near-optimal polynomial-time algorithms for constructing the HNF of $A$ can be found in \cite{HNFOptAlg,StorjohannDoc}. 


\subsection{Preliminaries from Discrepancy Theory}\label{sec:prediscrep}

As noted in the Introduction, we employ tools from discrepancy theory. Below we recall the results and definitions we need.
\begin{definition}\label{def:disc}
For a matrix $A \in \RR^{k \times n}$, \emph{its discrepancy and its hereditary discrepancy} with respect to $\ell_p$-norm are defined by
\begin{gather*}
\disc_p(A) = \min_{z \in \{0,\, 1\}^n} \left\| A (z - 1/2 \cdot \BUnit)  \right\|_p,\\
\herdisc_p(A) = \max_{\IC \subset \intint n} \disc_p(A_{* \JC}).
\end{gather*}
For an arbitrary norm $\norm{\cdot}$, we adopt the analogous notation 
\begin{equation*}
    \disc_{\norm{\cdot}}(A) \quad\text{and}\quad \herdisc_{\norm{\cdot}}(A).
\end{equation*}
\end{definition}

\begin{remark}
    Note that the definition of $\disc_p$ given here differs slightly from the classical one:
    \[
        \min_{z \in \{-1,\, 1\}^n} \left\| A z \right\|_p.
    \]
    These two definitions differ only by a constant factor of $2$:
    \[
        \min_{z \in \{-1,\, 1\}^n} \left\| A z \right\|_p
        =
        2 \cdot \min_{z \in \{0,\, 1\}^n} \left\| A (z - 1/2 \cdot \BUnit) \right\|_p.
    \]
    We adopt the former \Cref{def:disc}, since it was historically chosen in the work of Jansen and Rohwedder \cite{DiscConvILP}, on which our approach is based; in this way, our dynamic program will resemble the one in \cite{DiscConvILP} as closely as possible.
\end{remark}

A central setting in discrepancy theory is the Koml\'os setting, which deals with matrices $A \in \RR^{k \times n}$ all of whose columns have Euclidean norm at most one. We denote this class of matrices by
\begin{equation*}
    \KS_n = \left\{A \in \RR^{k \times n} \colon \norm{A_{* i}}_2 \leq 1,\; \forall i \in \intint n \right\}.
\end{equation*}
The long-standing Koml\'os conjecture states that 
\begin{equation*}
    \forall A \in \KS_n,\quad \disc_\infty(A) = O(1).
\end{equation*}
Considerable work has gone into proving or disproving this conjecture, and the current best bound is due to Bansal \& Jiang \cite{BansalDecoupling}:
\begin{equation*}\label{eq:BanDiscUB}
    \forall A \in \KS_n,\quad \disc_\infty(A) = \sqrt[4]{\log n} \cdot \bigl(\log\log n\bigr)^{O(1)}.
\end{equation*}

However, as suggested by an anonymous reviewer, it is much more convenient for our purposes to use $\disc_2$ instead of $\disc_\infty$. The following classical fact, which serves as a basic illustration of the probabilistic method, states that
\begin{equation}\label{eq:Disc2UB}
    \forall A \in \KS_n,\quad \disc_2(A) \leq \frac{\sqrt{n}}{2}.
\end{equation}
Naturally, this fact can be found in Alon \& Spencer \cite[Theorem~2.4.1]{AlonSpencerBook}. We must also recall an important property concerning the discrepancy of matrices $A \in \RR^{k \times n}$ with $k \leq n$.
\begin{lemma}[{Alon \& Spencer \cite[Corollary 13.3.3]{AlonSpencerBook}}]\label{lm:DiscLowk}
    Let $\norm{\cdot}$ be an arbitrary vector norm. Suppose that $\disc_{\norm{\cdot}}(A_{* \JC}) \leq H$, for every subset $\JC \subset \intint n$ with $\abs{\JC} \leq k$. Then, $\disc(A)_{\norm{\cdot}} \leq 2 H$.
\end{lemma}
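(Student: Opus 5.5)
The plan is the standard two-phase argument: first use linear algebra to reduce the fractional point $\tfrac12\BUnit$ to one with at most $k$ fractional coordinates, without changing $A x$; then round the remaining $\le k$ coordinates using the hypothesis, via the classical bound ``linear discrepancy $\le 2\cdot$ hereditary discrepancy''.

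\emph{Phase 1 (linear-algebraic reduction).} Start with $x = \tfrac12 \BUnit \in [0,1]^n$ and let $F = \{i \colon 0 < x_i < 1\}$ be the set of fractional coordinates. While $\abs{F} > k$, the $k \times \abs{F}$ matrix $A_{*F}$ has a nonzero kernel vector $y$. We move $x_F$ along $y$ until the first coordinate of $x_F$ reaches $0$ or $1$. This keeps $A x = A \cdot \tfrac12\BUnit$, keeps $x \in [0,1]^n$, and strictly shrinks $F$. After at most $n$ steps we obtain $x$ with $A x = \tfrac12 A\BUnit$ whose fractional support $\JC$ satisfies $\abs{\JC} \le k$. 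It therefore suffices to find $z \in \{0,1\}^{\JC}$ with $\norm{A_{*\JC}(x_{\JC} - z)} \le 2H$. Extending $z$ by the already integral coordinates of $x$ then gives
\[
\norm{A(z - \tfrac12\BUnit)} = \norm{A_{*\JC}(z - x_{\JC})} \le 2H .
\]

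\emph{Phase 2 (bitwise rounding).} Every subset $S \subseteq \JC$ has $\abs{S} \le k$, so by hypothesis $\disc_{\norm{\cdot}}(A_{*S}) \le H$. First assume each $x_i$, $i \in \JC$, has a finite binary expansion with $T$ bits. Let $S$ be the set of coordinates whose $T$-th bit is $1$, and pick $w \in \{0,1\}^S$ with $\norm{A_{*S}(w - \tfrac12\BUnit)} \le H$. Define $x'_i = x_i + 2^{-T}(2w_i - 1)$ for $i \in S$ and $x'_i = x_i$ otherwise. Then:
\begin{itemize}
\item $x'$ has at most $T-1$ bits;
\item $x'$ stays in $[0,1]^{\JC}$, because each coordinate we move has last bit $1$;
\item $\norm{A_{*\JC}(x' - x)} = 2^{1-T}\norm{A_{*S}(w - \tfrac12\BUnit)} \le 2^{1-T}H$.
\end{itemize}
Iterating from $T$ down to $1$ yields an integral $z$ with total error at most $\sum_{t=1}^T 2^{1-t}H < 2H$.

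\emph{Main obstacle: arbitrary real $x_{\JC}$.} For general real $x_{\JC}$, truncate to $T$ bits, choosing the truncation inside $[0,1]$. This adds an error of at most $2^{-T}\sum_{i\in\JC}\norm{A_{*i}}$, which tends to $0$. Since $\{0,1\}^{\JC}$ is finite, some fixed $z$ achieves error $\le 2H + o(1)$ for infinitely many $T$, and hence error $\le 2H$ exactly. This passage to the limit is the only delicate point; everything else is routine.
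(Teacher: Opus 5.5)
Your proof is correct and is essentially the argument the paper relies on. The paper gives no proof of its own; it defers to Alon \& Spencer's Corollary 13.3.3, whose proof is exactly your two phases: kernel moves until at most $k$ coordinates are fractional, then bit-by-bit rounding, which bounds the linear discrepancy by twice the hereditary discrepancy in the paper's $\{0,1\}$, $1/2$ normalization. As you note, only the triangle inequality and homogeneity of the norm are used, which is why the paper can claim the extension to arbitrary norms. Your constants are right for this normalization, and your pigeonhole/limit step for non-dyadic coordinates is a valid way to close the remaining detail.
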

Originally, this statement was proposed only for hypergraph discrepancy with respect to $\ell_\infty$ norm. However, it is straightforward to see from the original proof that it extends to matrices and general norms as well. Combining Lemma \ref{lm:DiscLowk} with
\eqref{eq:Disc2UB}, 
we obtain
\begin{equation}\label{eq:BanDiscUBReduced}
    \forall A \in \KS_n \text{with at most $k$ rows},\quad \herdisc_2(A) \leq \sqrt{k}.
\end{equation}

The following key lemma, due to K.~Jansen \& L.~Rohwedder \cite{DiscConvILP}, connects results from discrepancy theory with ILP theory. The original version of this lemma is stated with respect to $\ell_\infty$ norm. However, it can be directly checked that it can be generalized to arbitrary norms.
\begin{lemma}[K.~Jansen \& L.~Rohwedder \cite{DiscConvILP}]\label{lm:disc_rounding} Let $A \in \RR^{k \times n}$ with $\rank(A) = k$, $\norm{\cdot}$ be an arbitrary norm and $x \in \ZZ^n_{\geq 0}$. Then, there exists a vector $z \in \ZZ^n_{\geq 0}$ with
\begin{enumerate}
    \item $z \leq x$;
    \item $
\norm{A(z - x/2)} \leq \herdisc_{\norm{\cdot}}(A)
$.
\end{enumerate}
Assuming additionally that $\norm{x}_1 > 1$, there exists a vector $z \in \ZZ^n_{\geq 0}$ with
\begin{enumerate}
    \item $z \leq x$;
    \item $\frac{1}{6} \cdot \norm{x}_1 \leq \norm{z}_1 \leq \frac{5}{6} \cdot \norm{x}_1$;
    \item $
\norm{A(z - x/2)} \leq 2 \herdisc_{\norm{\cdot}}(A)
$.
\end{enumerate}
\end{lemma}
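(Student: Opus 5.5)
The plan is to reduce everything to the discrepancy of $A$ restricted to the \emph{odd part} of $x$, by splitting each coordinate of $x$ into an even part and a parity bit. For every $i \in \intint n$, write $x_i = 2 q_i + r_i$ with $q_i \in \ZZ_{\geq 0}$ and $r_i \in \{0,1\}$. Let $\JC = \supp(r)$, $s = \norm{x}_1$ and $\rho = \norm{r}_1 = \abs{\JC}$. I will look for $z$ of the form $z = q + w$, where $w \in \{0,1\}^n$ and $\supp(w) \subseteq \JC$. Then $z \le q + r \le x$ and $z \in \ZZ^n_{\geq 0}$ hold automatically. Moreover $z - x/2 = w - r/2$, so only $\norm{A(w - r/2)} = \norm{A_{*\JC}(w_\JC - \tfrac12 \BUnit)}$ has to be controlled.

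\textbf{First part.} By \Cref{def:disc} there is $w_\JC \in \{0,1\}^{\JC}$ with $\norm{A_{*\JC}(w_\JC - \tfrac12\BUnit)} = \disc_{\norm{\cdot}}(A_{*\JC}) \le \herdisc_{\norm{\cdot}}(A)$. This gives the first claim directly.

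\textbf{Second part.} Here the extra freedom is that $w$ can be replaced by its complement inside a block. If $w' \in \{0,1\}^{\JC'}$ is a discrepancy minimizer on a block $\JC'$, then so is $\BUnit - w'$, because the error vector just changes sign and the norm is symmetric. The two choices have $\ell_1$-norms summing to $\abs{\JC'}$, so we may choose whichever is at least, or at most, $\abs{\JC'}/2$.

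Split $\JC$ into disjoint $\JC_1, \JC_2$ with $\abs{\JC_1} = \lceil \rho/2 \rceil$ and $\abs{\JC_2} = \lfloor \rho/2 \rfloor$. On $\JC_1$ take a minimizer $w^1$ with $\norm{w^1}_1 \ge \abs{\JC_1}/2$. On $\JC_2$ take a minimizer $w^2$ with $\norm{w^2}_1 \le \abs{\JC_2}/2$. Set $w = w^1 + w^2$. By the triangle inequality, $\norm{A(w - r/2)} \le \disc(A_{*\JC_1}) + \disc(A_{*\JC_2}) \le 2\herdisc_{\norm{\cdot}}(A)$.

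For the size, $\norm{q}_1 = (s - \rho)/2$. Also $\norm{w}_1 \ge \abs{\JC_1}/2 \ge \rho/4$ and $\norm{w}_1 \le \abs{\JC_1} + \abs{\JC_2}/2 \le 3\rho/4 + 1/2$. Hence
\[
\frac s2 - \frac\rho4 \;\le\; \norm{z}_1 \;\le\; \frac s2 + \frac\rho4 + \frac12 .
\]
Since $\rho \le s$, the lower bound gives $\norm{z}_1 \ge s/4 \ge s/6$.

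\textbf{Main obstacle: rounding in the upper bound.} The upper bound is only delicate for tiny $s$, and this is where the hypothesis $\norm{x}_1 > 1$ is used; since $x$ is integral, this means $s \ge 2$.

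If $\rho \le s - 2$, then $s/2 + \rho/4 + 1/2 \le 3s/4 \le 5s/6$.

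Otherwise $\rho \in \{s-1, s\}$, and I use a sharper count. When $\rho$ is even, $\abs{\JC_1} = \rho/2$ and the $1/2$ disappears. When $\rho$ is odd, $\abs{\JC_2} = (\rho-1)/2$ and $\norm{w^2}_1 \le \lfloor \abs{\JC_2}/2 \rfloor$. A short case check for $s \in \{2, 3, 4\}$ then confirms $\norm{z}_1 \le 5s/6$; for larger $s$ the slack $s/12 \ge 1/2$ suffices. If some small case fails, I would instead put the rounding on $\JC_2$ (choosing $\abs{\JC_1} = \lfloor \rho/2 \rfloor$), or make $\JC_2$ empty when $\rho \le 2$.

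This bookkeeping of integer effects for small $\norm{x}_1$ is the only nonconceptual step. The factor $2$ in the error bound comes solely from using two independent blocks.
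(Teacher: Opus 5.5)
Your proof is correct and is essentially the standard Jansen--Rohwedder argument, which the paper only cites rather than proves: the parity split $x = 2q + r$ (equivalently, pairing equal columns) reduces everything to a $0/1$ coloring of $A_{*\JC}$, and two blocks, each colored optimally and complemented as needed, balance $\norm{z}_1$ at the price of the factor $2$, using only symmetry and the triangle inequality of the norm, which is why the paper can say the extension to arbitrary norms is immediate. One small slip: $s/12 \geq 1/2$ needs $s \geq 6$, so $s = 5$ is not covered by your case split as written; however, since $s - \rho = 2\norm{q}_1$ is even, the only delicate case is $\rho = s$ odd, where your sharper count gives $\norm{z}_1 \leq (s+1)/2 + (s-1)/4 = (3s+1)/4 \leq 5s/6$ for all $s \geq 3$ (alternatively, splitting $x$ itself into two nonnegative integer parts of $\ell_1$-norm $\lfloor s/2 \rfloor$ and $\lceil s/2 \rceil$ and applying the first part to each gives $s/6 \leq \norm{z}_1 \leq 3s/4$ with no case analysis).
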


\subsection{Enumeration of Integer Points in Rational Parallelepipeds}\label{sec:enum}

The dynamic programming algorithm we will use requires enumerating integer points of a parallelepiped as a subroutine.
    
\begin{restatable}{lemma}{ParQEnumLm}\label{lm:ParQEnum}
    For a given upper triangular matrix $H \in \QQ^{n \times n}$ with strictly positive diagonal elements and $r \in \QQ^n$, denote $\PC = r + H \cdot [0,1)^n$. The following propositions hold:
    \begin{enumerate}
        \item \begin{equation*}
            \prod\limits_{i = 1}^n \bigl\lfloor H_{i i} \bigr\rfloor \leq \abs{\PC \cap \ZZ^n} \leq \prod\limits_{i = 1}^n \bigl\lceil H_{i i} \bigr\rceil;
        \end{equation*} 
    
        \item The elements of $\PC \cap \ZZ^n$ can be enumerated using $O(n)^{2} \cdot \prod_{i = 1}^n \lceil H_{i i} \rceil$ operations.
    \end{enumerate}
\end{restatable}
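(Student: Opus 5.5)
The plan is to exploit the triangular structure of $H$ and peel off coordinates from the last one to the first. A point $x = r + H t$ with $t \in [0,1)^n$ is integral exactly when its coordinates satisfy a triangular system. Because $H$ is upper triangular, the $i$-th coordinate is
\[
x_i = r_i + \sum_{j \ge i} H_{ij} t_j ,
\]
so $x_i$ depends only on $t_i, \dots, t_n$.

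Now fix integers $x_{i+1}, \dots, x_n$. These determine $t_{i+1}, \dots, t_n$ uniquely by back substitution, since the diagonal entries are positive. Then $x_i$ must lie in the half-open interval
\[
\bigl[\, s_i,\; s_i + H_{ii} \bigr), \qquad s_i = r_i + \sum_{j>i} H_{ij} t_j .
\]
Consequently $\PC \cap \ZZ^n$ is in bijection with the leaves of a tree of depth $n$. At level $i$, every node has exactly as many children as there are integers in a half-open interval of length $H_{ii}$.

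A half-open interval of length $h > 0$ contains either $\lfloor h \rfloor$ or $\lceil h \rceil$ integers, and never any other number. The number of leaves is therefore a product of factors each lying in $[\lfloor H_{ii} \rfloor, \lceil H_{ii} \rceil]$, which gives part 1. The main point to check carefully is exactly this counting claim for half-open intervals, including the case where $h$ is an integer. In that case the count is exactly $h$, because the interval is closed at one end and open at the other.

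For part 2, the enumeration is a depth-first traversal of this tree. At each node we need $t_{i+1}, \dots, t_n$, obtained by back substitution. We maintain these incrementally: when descending to level $i$, we choose the integer $x_i$ in the interval and compute
\[
t_i = \frac{x_i - s_i}{H_{ii}} .
\]
We then update the partial sums $s_{i'}$ for all $i' < i$ by adding $H_{i'i} t_i$, which costs $O(n)$ operations per node. Computing the interval endpoints costs $O(1)$, namely a ceiling and a comparison.

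The number of tree nodes at depth $d$ is at most the product of the corresponding $\lceil H_{jj} \rceil$ factors, so the total number of nodes is at most $n \cdot \prod_{i} \lceil H_{ii} \rceil$. Each node costs $O(n)$, giving $O(n)^2 \cdot \prod_i \lceil H_{ii} \rceil$. Empty branches cannot blow this up: every $\lceil H_{ii} \rceil \ge 1$, and each interval examined yields at most $\lceil H_{ii} \rceil$ children. A node with zero children still costs only $O(n)$ and is charged to its parent.

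The only subtle obstacle is that an interval may contain $\lfloor H_{ii} \rfloor = 0$ integers when $H_{ii} < 1$, which creates dead ends. This is harmless for the stated bounds, because the lower bound becomes trivially $0$ and the node-counting argument above still applies.
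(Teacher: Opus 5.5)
Your proposal is correct and takes essentially the same route as the paper. You enumerate the coordinates back to front through the triangular system, and you build a search tree whose branching at level $i$ is the number of integers in the half-open interval $[\tau_i,\tau_i+H_{ii})$ of length $H_{ii}$. You then pay $O(n)$ per node over at most $n\prod_i\lceil H_{ii}\rceil$ nodes, whereas the paper charges $O(n)\lceil H_{kk}\rceil$ per node at level $k$, which amounts to the same count.
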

\noindent Due to its length, the proof has been moved to \Cref{lm:proof:ParQEnum}.

\begin{corollary}\label{cor:ParQEnum}
    For a given upper triangular matrix $H \in \QQ^{n \times n}$ and $r \in \QQ^n$, denote $\PC = r + H \cdot [0,1)^n$ and $\Delta = \abs{\det H}$. Assume additionally that $H_{i i} \geq  \gamma$ for each $i$. Then, the following propositions hold:
    \begin{enumerate}
        \item \begin{equation*}
            \abs{\PC \cap \ZZ^n} \leq \left(1 + \frac{1}{\gamma}\right)^n \cdot \Delta;
        \end{equation*} 
    
        \item The elements of $\PC \cap \ZZ^n$ can be enumerated using $O(n^2) \cdot \left(1 + 1/\gamma\right)^n \cdot \Delta$ operations.
    \end{enumerate}
\end{corollary}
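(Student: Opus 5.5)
The corollary follows from \Cref{lm:ParQEnum} once we bound $\prod_{i=1}^n \lceil H_{ii}\rceil$ in terms of $\Delta=\abs{\det H}$. I will tacitly assume $\gamma>0$, so the diagonal entries are strictly positive and \Cref{lm:ParQEnum} applies. Since $H$ is upper triangular, $\Delta = \prod_{i=1}^n H_{ii}$. It therefore suffices to prove the per-coordinate inequality
\begin{equation*}
    \lceil H_{ii} \rceil \leq \left(1 + \frac{1}{\gamma}\right) H_{ii} \qquad \text{for each } i \in \intint n .
\end{equation*}
Multiplying these inequalities over $i$ gives
\begin{equation*}
    \prod_{i=1}^n \lceil H_{ii}\rceil \leq \left(1+\frac{1}{\gamma}\right)^n \Delta .
\end{equation*}

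The per-coordinate inequality comes from $\lceil t \rceil < t + 1$ for every real $t$. Hence
\begin{equation*}
    \lceil H_{ii}\rceil < H_{ii} + 1 = H_{ii}\left(1 + \frac{1}{H_{ii}}\right) \leq H_{ii}\left(1+\frac{1}{\gamma}\right),
\end{equation*}
where the last step uses $H_{ii} \geq \gamma > 0$. There is a degenerate case $\gamma < 1$ with $H_{ii}<1$, where $\lceil H_{ii}\rceil = 1$ may exceed $H_{ii}$ by a large ratio. The bound still holds because $1 \le H_{ii}(1+1/H_{ii})$, which is the same estimate, so no case split is needed.

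With this bound in hand, both claims follow directly. Part~1 is the upper bound in \Cref{lm:ParQEnum}(1) combined with the product estimate above. Part~2 is the enumeration cost $O(n)^2 \cdot \prod_i \lceil H_{ii}\rceil$ from \Cref{lm:ParQEnum}(2), bounded the same way to get $O(n^2)\cdot(1+1/\gamma)^n\cdot\Delta$.

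No step here is a real obstacle. The only points to check are that the elementary ceiling estimate is uniform in $H_{ii}\ge\gamma$, and that positivity of the diagonal is available so that \Cref{lm:ParQEnum} can be invoked.
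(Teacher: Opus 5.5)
Your proposal is correct and follows the paper's approach: the paper states this as an immediate consequence of \Cref{lm:ParQEnum} without a separate proof, and the only step needed is your estimate $\lceil H_{ii}\rceil \le H_{ii}+1 \le (1+1/\gamma)\, H_{ii}$, multiplied over $i$ and combined with $\Delta = \prod_i H_{ii}$. Making $\gamma>0$ explicit (so that the diagonal is positive and the lemma applies) is the right tacit hypothesis; your separate remark on the case $H_{ii}<1$ is harmless but already covered by the general estimate.
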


    

\subsection{Sparse Boolean Convolution via Sparse Polynomial Multiplication}\label{sec:bool_conv}

In this subsection, we study the sparse generalized Boolean convolution problem, which will be used to speed up our dynamic programming algorithm for solving feasibility problems.
\begin{definition}\label{def:bool_conv}
    Let $\alpha$ and $\beta$ be given functions from $\ZZ^k$ to $\{0,1\}$ with finite support. We assume that the functions are represented by lists of elements of their supports. 
    The \emph{Boolean convolution} of the functions $\alpha$ and $\beta$ is the function $\gamma \colon \ZZ^k \to \{0,1\}$, defined by
    \begin{equation*}
        \gamma_x = \bigvee\limits_{\substack{y \in \supp(\alpha)\\x-y \in \supp(\beta)}} \alpha_y \wedge \beta_{x-y}.
    \end{equation*}
    Note that $\supp(\gamma) = \left\{x \in \ZZ^k \colon \exists y \in \supp(\alpha) \text{ such that }(x-y) \in \supp(\beta)\right\}$.
\end{definition}
\begin{problem}[Sparse generalized Boolean convolution]\label{prob:bool_conv}
     Given two finite-support functions
    $\alpha,\beta \colon \ZZ^k \to \{0,1\}$, each represented by the list of
    points where it is equal to $1$, compute their Boolean convolution
    $\gamma$ from \Cref{def:bool_conv} in the same sparse representation.
\end{problem}
To construct a fast algorithm for  \Cref{prob:bool_conv}, we will use the recent result on sparse polynomial multiplication by Giorgi et al. \cite{EssentiallyOptSparsePolyMult}. For a multivariate polynomial $f \in \ZZ[x_1,\dots,x_n]$, let $\norm{f}_{\infty}$ be the maximum coefficient of $f$, $\#f$ be the number of non-zero coefficients of $f$, and $\deg_{x_i} f$ be the maximum degree of the variable $x_i$ in monomials of $f$.
\begin{theorem}[{Giorgi et al. \cite[Corollary 4.8]{EssentiallyOptSparsePolyMult}}]\label{th:poly_mult}
    There exists an algorithm that takes as inputs $f,g \in \ZZ[x_1, \dots, x_n]$ and $0 < \varepsilon < 1$, and computes $f g$ with probability at least $1-\varepsilon$ using 
    \begin{equation*}
        O\bigl(T \cdot (n \log d + \log C) \cdot \polylog(1/\varepsilon)\bigr)\quad \text{bit operations},
    \end{equation*}
    where $T = \max\left\{\#(fg),\#f,\#g\right\}$, $d = \max_i \deg_{x_i} (fg)$, and $C = \max\{\norm{f}_\infty,\norm{g}_\infty\}$.
\end{theorem}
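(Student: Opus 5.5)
This statement is quoted from Giorgi et al., and the paper uses it as a black box. If I had to reconstruct it, I would follow the standard route: reduce to univariate, do a randomized sparse interpolation of the product, and run several geometrically decreasing rounds.

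\emph{Step 1 (Kronecker substitution).} Set $D = d+1$ and map $x_i \mapsto y^{D^{i-1}}$. Since every partial degree of $f$, $g$ and $fg$ is at most $d$, this map is injective on the monomials that occur. It also commutes with multiplication. The product $fg$ therefore becomes a univariate sparse polynomial of degree below $D^n$, with exponents of bit length $O(n\log d)$, and the inverse map is just base-$D$ digit extraction. From here on I would work with univariate $F = f(y)$ and $G = g(y)$, writing $\log\deg = O(n \log d)$. The coefficients of $FG$ are bounded by $T C^2$, so each has $O(\log T + \log C)$ bits.

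\emph{Step 2 (folding modulo $y^p-1$ and recovering exponents).} First I would guess an upper bound $\tilde T$ on $\#(fg)$ and double it whenever the final check below fails; this costs only a constant factor. Then I pick a random prime $p = \Theta(\tilde T)$. I reduce $F$ and $G$ modulo $y^p - 1$, which just reduces their exponents mod $p$. I multiply the two folded dense polynomials of length $p$ by FFT over a suitable modulus, in $\tilde O(p)$ bit operations. In parallel I compute the same folded product for the pair $(yF', G) + (F, yG')$, which gives $y(FG)'$ folded. For every residue class containing exactly one term $c\,y^e$ of $FG$, the two folded coefficients are $c$ and $ec$, so their ratio recovers $e$. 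A random prime of size $\Theta(\tilde T)$ makes any fixed pair of exponents collide with probability $O(\log\deg/\tilde T)$. This is only a constant-fraction collision guarantee per round, but that suffices: after each round I subtract the recovered terms and repeat with about half the remaining budget. The total cost is then geometric, $O(T(n\log d + \log C))$ up to the polylogarithmic factor.

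\emph{Step 3 (certification and amplification).} To reach success probability $1-\varepsilon$, I would verify the candidate product $h$ by checking $h(\xi) = F(\xi)G(\xi)$ at random points modulo random primes (Schwartz--Zippel). I would repeat the whole procedure $O(\log(1/\varepsilon))$ times, which gives the $\polylog(1/\varepsilon)$ factor.

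\emph{Main obstacle.} The hard part is Step 2: keeping the bit cost linear in $T(n\log d + \log C)$ rather than paying an extra $\log \deg$ factor. The naive choice $p \approx T\log\deg$ would eliminate all collisions but costs that extra factor. My first attempt would be the constant-collision-rate scheme with geometric shrinking described above. Two extra checks are needed there. Exponents must be reconstructed modulo a prime large enough to hold them, which costs $O(n\log d)$ bits per term. Colliding classes must be detected and discarded rather than misread, which I would handle with a second derivative-type check: a genuine single term has a ratio that is an integer below $D^n$ and is consistent under a second random fold.
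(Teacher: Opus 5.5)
The paper does not prove this statement. It is quoted from Giorgi et al.\ and used only as a black box in \Cref{lm:bool_conv}, so the only question is whether your reconstruction stands on its own. Steps~1 and~3 are standard and fine, but Step~2 has a genuine gap, exactly where you flagged the difficulty.

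Your per-pair estimate is right: a fixed nonzero difference of two exponents below $D^n$ is divisible by a random prime $p\in[\lambda,2\lambda]$ with probability $O(n\log d/\lambda)$. But a term is isolated only if it avoids all of its $T-1$ partners. The union bound then gives a per-term collision probability of $O(T\,n\log d/\lambda)$, which is vacuous for $\lambda=\Theta(\tilde T)$. So the claimed constant fraction of isolated terms per round does not follow, and in the worst case it is false.

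Here is a concrete instance. Take $n=1$, $P=\prod_{p\le 2\Lambda}p$ (so $\log P=\Theta(\Lambda)$), $f=\sum_{i<3}y^{iP}$ and $g=\sum_{j<3}y^{3jP}$. Then $fg=\sum_{l<9}y^{lP}$, with $T=9$, $C=1$ and $\log d=\Theta(\Lambda)$. For every prime $p\le 2\Lambda$, all nine terms fold into the class of $y^0$. Your ratio test returns the integer $4P<d$, identically under every such fold, so the spurious term $9y^{4P}$ passes both of your checks. Only the final verification rejects it, and the doubling of $\tilde T$ continues until $\lambda$ exceeds $\Lambda$. Each attempt computes a derivative fold with $\Theta(\tilde T)$ slots of $\Theta(\log d)$ bits, so the total cost is of order $\log^2 d$. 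The claimed bound for this instance is only $\tilde O(\log d)$.

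What is missing is therefore the actual content of the cited result. The union bound does give a constant fraction of isolated terms once $\lambda=\Theta(T\,n\log d)$. That is the threshold for a constant fraction; eliminating all collisions would need $\lambda=\Theta(T^2 n\log d)$, not $T\log\deg$ as you wrote. With $\lambda=\Theta(T\,n\log d)$ your scheme is correct. But the derivative folds then have $\Theta(T\,n\log d)$ slots of $\Omega(n\log d)$ bits, so the cost is $\tilde O\bigl(T\,n\log d\,(n\log d+\log C)\bigr)$. That is exactly the extra factor you wanted to avoid, and your proposal contains no idea that removes it.

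Separately, Giorgi et al.\ prove a quasi-linear, i.e.\ soft-$O$, bound. An FFT-based argument like yours can give the displayed estimate only up to factors polylogarithmic in $T$, $d$ and $C$, not with the plain $O(\cdot)$ written here. That weaker form is all the paper needs, since such factors are absorbed into $\poly(\inputsize)$ in the feasibility bound of \Cref{th:DP}.
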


The following lemma provides a simple reduction of  \Cref{prob:bool_conv} to sparse multivariate polynomial multiplication. Due to its technical nature and space limitations, we have moved the proof to \Cref{lm:proof:bool_conv}.
\begin{restatable}{lemma}{BoolConvLm}\label{lm:bool_conv}
    Problem~\ref{def:bool_conv} can be solved in
    \begin{equation*}
        O\left(T \cdot (k \log L)\right)
    \end{equation*}
    bit operations with high probability, where $T = \max\left\{\abs{\supp(\alpha)}, \abs{\supp(\beta)}, \abs{\supp(\gamma)}\right\}$ and $L$ is the maximum $\ell_{\infty}$-norm of elements in $\supp(\alpha)$ and $\supp(\beta)$.
\end{restatable}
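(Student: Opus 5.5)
The plan is to encode the two sparse Boolean functions as multivariate polynomials with $0/1$ coefficients and invoke \Cref{th:poly_mult}. The supports may contain negative coordinates, so we first shift them. Let $L$ be the maximum $\ell_\infty$-norm of points in $\supp(\alpha)\cup\supp(\beta)$. Set
\begin{equation*}
f(x_1,\dots,x_k)=\sum_{y\in\supp(\alpha)} x^{y+L\BUnit}, \qquad g(x_1,\dots,x_k)=\sum_{y\in\supp(\beta)} x^{y+L\BUnit},
\end{equation*}
where $x^{v}=\prod_i x_i^{v_i}$. All exponents now lie in $\intint[0]{2L}^k$. Then
\begin{equation*}
fg=\sum_{v} c_v\, x^{v+2L\BUnit},
\end{equation*}
with $c_v=\abs{\{(y,w)\in\supp(\alpha)\times\supp(\beta)\colon y+w=v\}}$. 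Since all coefficients of $f$ and $g$ are nonnegative, no cancellation can occur. Hence $c_v>0$ if and only if $\gamma_v=1$, i.e.\ $\supp(\gamma)=\{v\colon c_v\neq0\}$. After computing $fg$, we read off its monomials and subtract $2L\BUnit$ from each exponent to obtain the list representation of $\gamma$.

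For the parameters of \Cref{th:poly_mult}:
\begin{itemize}
\item $\#f=\abs{\supp(\alpha)}$, $\#g=\abs{\supp(\beta)}$ and $\#(fg)=\abs{\supp(\gamma)}$, so the quantity $T$ there equals our $T$.
\item $d=\max_i\deg_{x_i}(fg)\le 4L$, so $n\log d=O(k\log L)$, with the trivial case $L\le1$ handled by $\log$ of a constant.
\item $C=\max\{\norm f_\infty,\norm g_\infty\}=1$, so $\log C=0$.
\end{itemize}
Fixing $\varepsilon$ to a constant, or to $1/\poly$ so that "high probability" holds and the $\polylog(1/\varepsilon)$ factor is absorbed, gives $O(T\cdot k\log L)$ bit operations. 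Note that intermediate coefficients of $fg$ can be as large as $T^2$; this only matters through the output size, which the theorem already accounts for.

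The remaining overhead is building $f$ and $g$ from the lists and decoding the output. Both cost $O(k)$ word operations on $O(\log L)$-bit numbers per monomial, i.e.\ $O(T\,k\log L)$ bit operations, which is within the bound.

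The main point needing care is confirming that the cost model of \Cref{th:poly_mult} matches the sparse input representation, i.e.\ that $f$ and $g$ are given as lists of (exponent vector, coefficient) pairs, and handling the degenerate cases. If either support is empty, $\gamma\equiv0$ and we output the empty list. If $L=0$, the only possible point is the origin and the convolution is trivial.
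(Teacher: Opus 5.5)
Your proposal is correct and follows the paper's proof essentially step for step. You shift both supports by $L\BUnit$, encode them as $0/1$-coefficient polynomials, multiply via \Cref{th:poly_mult} with $C=1$ and $\max_i\deg_{x_i}(fg)\le 4L$, and read off $\supp(\gamma)$ from the nonzero coefficients; your explicit remark that nonnegativity rules out cancellation is a point the paper uses only implicitly. One small caveat, shared with the paper: taking $\varepsilon=1/\poly$ leaves a genuine $\polylog(1/\varepsilon)$ factor that is not absorbed into the $O(\cdot)$, so ``high probability'' and the clean $O(T\cdot k\log L)$ bound do not both hold literally, though this is harmless for the use in \Cref{th:DP}.
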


\subsection{The L\"owner ellipsoid for a set of points}

Let $\AC \subseteq \RR^k$ be a set of $n$ points in general position. For simplicity, we assume that $\AC$ is $\BZero$-symmetric. It is a known fact that there exists a unique minimum-volume ellipsoid $\EC$, called the L\"owner-John or L\"owner ellipsoid, that contains $\AC$. Since $\AC$ is in general position and $\BZero$-symmetric, $\EC$ is $k$-dimensional and $\BZero$-symmetric.

The L\"owner ellipsoid $\EC$ can be represented by the following pair of primal and dual convex optimization problems (see, for example, Nikolov \cite[Section~2.3]{LargestSimplex_Nikolov}). The primal program is
\begin{gather}
    - \ln \det W \to \min\notag\\
    \begin{cases}
    a^\top W a \le 1,\quad a\in\AC,\\
    W\succ \BZero .
\end{cases}\label{eq:minell_primal}\tag{Primal-LJ-Ell}
\end{gather}
The dual program is
\begin{gather}
    \ln \det\left( \sum_{i=1}^n c_i \cdot a_i a_i^\top \right) \to \max\notag\\
    \begin{cases}
        \BUnit^\top c = k\\
        c \in \RR^n_{\geq 0}.
    \end{cases}\label{eq:minell_dual}\tag{Dual-LJ-Ell}
\end{gather}
The primal problem \eqref{eq:minell_primal} is a convex minimization problem over the open domain $\{ W \colon W \succ \BZero \}$ with affine constraints. Hence, due to Slater's condition \cite[Section~5.2.3]{BoydVandenbergheConvexOptimization}, there is no duality gap between \eqref{eq:minell_primal} and \eqref{eq:minell_dual}.

The following lemma shows that the volume of $\EC$ can be bounded in terms of the maximum volume of a parallelepiped spanned by vectors in $\AC$. This fact explains the main reason for the existence of \emph{LD-preconditioners} described in \Cref{sec:precond}.
\begin{lemma}\label{lm:ell_det}
    Let $A$ be the matrix whose columns are formed by elements of $\AC$. Let $\EC = \left\{x \in \RR^k \colon x^\top W x \leq 1 \right\}$ be the L\"owner ellipsoid, i.e., the minimum-volume ellipsoid containing $\AC$. Then, $\frac{1}{\sqrt{\det W}}  \leq e^{k/2} \cdot \Delta(A)$.
\end{lemma}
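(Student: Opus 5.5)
We prove the bound through the dual program \eqref{eq:minell_dual}, using strong duality to turn a statement about $\det W$ into a statement about a weighted Gram matrix. Since there is no duality gap, at the optimum $W^*$ of \eqref{eq:minell_primal}, which defines $\EC$, and an optimal dual vector $c^*$,
\[
-\ln\det W^* = \ln\det\Bigl(\sum_{i=1}^n c^*_i a_i a_i^\top\Bigr).
\]
Equivalently,
\[
\frac{1}{\det W^*} = \det\bigl(A \diag(c^*) A^\top\bigr), \qquad c^* \geq \BZero,\quad \BUnit^\top c^* = k .
\]
(Even without attainment, weak duality gives $\frac{1}{\det W} \leq \sup_c \det(A\diag(c)A^\top)$ over feasible $c$; the feasible set is compact, so the supremum is attained.) The task is therefore to show that
\[
\det\bigl(A\diag(c)A^\top\bigr) \leq e^{k}\,\Delta(A)^2 \quad\text{for every } c\geq \BZero \text{ with } \BUnit^\top c = k .
\]
Taking square roots then gives the claim.

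To bound this determinant we use the Cauchy--Binet formula. Write $A\diag(c)A^\top = B B^\top$ with $B = A\diag(\sqrt{c})$. Then
\[
\det(BB^\top) = \sum_{\JC\subseteq\intint n,\ \abs{\JC}=k} \det(A_{*\JC})^2 \prod_{j\in\JC} c_j .
\]
Each $\det(A_{*\JC})^2$ is at most $\Delta(A)^2$, so the sum is at most $\Delta(A)^2$ times the elementary symmetric polynomial $e_k(c)=\sum_{\abs{\JC}=k}\prod_{j\in\JC}c_j$.

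It remains to bound $e_k(c)$. Since all $c_i \geq 0$, every product $\prod_{j\in\JC} c_j$ over a $k$-subset appears exactly $k!$ times in the multinomial expansion of $(\sum_i c_i)^k$, alongside other nonnegative terms. Hence
\[
e_k(c) \leq \frac{(\sum_i c_i)^k}{k!} = \frac{k^k}{k!} \leq e^k ,
\]
where the last step is the standard inequality $k! \geq (k/e)^k$, which follows from $e^k = \sum_m k^m/m! \geq k^k/k!$. Combining the three steps yields $1/\det W^* \leq e^k\Delta(A)^2$, which is the statement.

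The argument has no real obstacle; the only point needing care is the duality step. We need either attainment of the dual optimum or the weak-duality direction. Weak duality suffices and follows directly from the standard Lagrangian derivation in Nikolov. The general-position assumption ensures that $A$ has rank $k$, so that $\Delta(A)>0$ and the matrices involved are nondegenerate.
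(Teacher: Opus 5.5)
Your proof is correct. The duality and Cauchy--Binet steps are the same as in the paper; the difference is how you bound $e_k(c)=\sum_{\abs{\JC}=k}\prod_{j\in\JC}c_j$. The paper uses \Cref{lm:c_opt}, a smoothing argument showing that $c=\frac{k}{n}\BUnit$ maximizes $e_k$ on the simplex. That gives $\binom{n}{k}(k/n)^k$, which the paper then relaxes to $e^k$. You instead read off $k!\,e_k(c)\le(\sum_i c_i)^k=k^k$ directly from the multinomial expansion. Your route is shorter and needs no auxiliary optimization lemma. The paper's intermediate bound is Maclaurin's inequality and is sharper for finite $n$. Since $\binom{n}{k}(k/n)^k\le k^k/k!$ always holds, that extra sharpness is discarded, and both routes end at the same $e^k$.

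One correction concerns the duality step, which you yourself single out as the delicate point. Your claim that weak duality suffices is backwards. Weak duality gives $\det(A\diag(c)A^\top)\le 1/\det W$ for every dual-feasible $c$ and every primal-feasible $W$. Directly: $W^{1/2}A\diag(c)A^\top W^{1/2}$ is positive semidefinite with trace $\sum_i c_i\,a_i^\top W a_i\le k$, so by AM--GM on its eigenvalues its determinant is at most $1$. That is a \emph{lower} bound on $1/\det W$, which is useless here. The upper bound you need is exactly the zero-gap inequality $-\ln\det W\le\sup_c \ln\det(A\diag(c)A^\top)$. This requires strong duality: Slater's condition, as the paper notes, or equivalently John's optimality conditions for the L\"owner ellipsoid. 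You are right that attainment of the dual optimum is not needed, and the supremum is attained anyway by compactness. Because your main line explicitly invokes ``no duality gap,'' the proof stands; only the parenthetical fallback and the closing claim about weak duality should be removed.
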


\noindent For the proof of \Cref{lm:ell_det}, we will use the following technical lemma, whose proof is deferred to \Cref{lm:proof:c_opt}.
\begin{restatable}{lemma}{COptLm}\label{lm:c_opt}
    For $\alpha > 0$ and integers $n \geq k > 0$, consider the following optimization problem:
    \begin{align*}
        &\sum\limits_{\JC \in \binom{\intint n}{k}} \prod\limits_{j \in \JC} c_j \to \max\\
        &\begin{cases}
            c_1 + \dots + c_n = \alpha\\
            c \in \RR_{\geq 0}^n.
        \end{cases}
    \end{align*}
    Then, the point $c = \frac{\alpha}{n} \cdot \BUnit$ is an optimal solution of this problem.
\end{restatable}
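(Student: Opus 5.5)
The statement to prove is \Cref{lm:c_opt}: the elementary symmetric polynomial $e_k(c)=\sum_{\JC}\prod_{j\in\JC}c_j$ is maximized over the scaled simplex $\{c\ge 0,\ \BUnit^\top c=\alpha\}$ at the uniform point $c=\frac{\alpha}{n}\BUnit$. I would prove it by a smoothing (averaging) argument. The feasible set is compact and $e_k$ is continuous, so a maximizer exists. It then suffices to show two things: replacing any two coordinates $c_p\neq c_q$ by their average never decreases $e_k$, and repeated averaging drives any point to the uniform one.

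The key computation isolates the dependence of $e_k$ on a pair $(c_p,c_q)$. Write
\[
e_k(c)=c_pc_q\,E_{k-2}+(c_p+c_q)\,E_{k-1}+E_k,
\]
where $E_j$ denotes the $j$-th elementary symmetric polynomial of the remaining $n-2$ coordinates, with $E_0=1$ and $E_j=0$ for $j<0$ or $j>n-2$. All $E_j\ge 0$ because the coordinates are nonnegative. Replacing $(c_p,c_q)$ by $\bigl(\frac{c_p+c_q}{2},\frac{c_p+c_q}{2}\bigr)$ keeps the sum $c_p+c_q$ fixed. It also keeps feasibility and does not decrease the product $c_pc_q$, by AM--GM. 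Hence $e_k$ does not decrease. The cases $k=1$ and $k=n$ need no separate treatment, since the formula covers them: for $k=1$ the objective is constant, and for $k=n$ it is the product.

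To pass from this local step to global optimality, I would avoid an infinite iterative scheme and argue as follows. Among all maximizers, which form a compact set, choose one, $c^*$, that minimizes the continuous function $\sum_i c_i^2$. If $c^*$ were not uniform, pick $p,q$ with $c^*_p\neq c^*_q$ and average them. The result is still a maximizer, because $e_k$ did not decrease and $c^*$ was already optimal. At the same time $\sum c_i^2$ strictly decreases, since $c_p^2+c_q^2>2\bigl(\frac{c_p+c_q}{2}\bigr)^2$ when $c_p\ne c_q$. This contradicts the choice of $c^*$. Therefore $c^*=\frac{\alpha}{n}\BUnit$, which is the claim.

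The only step needing care is this extremal-choice step. It is what turns "averaging does not decrease $e_k$" into "the uniform point is optimal" without requiring averaging to strictly increase $e_k$, which can fail when $E_{k-2}=0$. With the secondary minimization of $\sum_i c_i^2$ this is handled cleanly, so I expect no real obstacle. An alternative route, if preferred, is Maclaurin's inequality $\bigl(e_k/\binom nk\bigr)^{1/k}\le e_1/n$, which gives the same bound directly.
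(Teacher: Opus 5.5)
Your proposal is correct and follows essentially the same route as the paper: pick a maximizer that minimizes $\sum_i c_i^2$, average two unequal coordinates, use the pair decomposition to see that the objective does not decrease while $\sum_i c_i^2$ strictly drops, and conclude by contradiction. Your compactness argument for the existence of such an extremal maximizer fills in a step the paper leaves implicit.
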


\begin{proof}[Proof Of \Cref{lm:ell_det}]
    Consider the dual program \eqref{eq:minell_dual}. Denoting $C = \diag(c)$, we have
\begin{multline*}
    1/\det W = \det\left( \sum_{i=1}^n c_i \cdot a_i a_i^\top \right) = \det \bigl( A C A^\top \bigr) = \\ 
    \sum\limits_{\JC \in \binom{\intint n}{k}} \det \bigl( (A \sqrt{C})_{*\JC}\bigr)^2 =
    \sum\limits_{\JC \in \binom{\intint n}{k}} \det\bigl(A_{*\JC}\bigr)^2 \cdot \det C_{\JC} \leq \\
    \Delta(A)^2 \cdot \sum\limits_{\JC \in \binom{\intint n}{k}} \prod\limits_{j \in \JC} c_j \overset{\text{by Lemma \ref{lm:c_opt}}}{\leq} \\
    \Delta(A)^2 \cdot \binom{n}{k} \cdot \left( \frac{k}{n} \right)^k \leq e^k \cdot \Delta(A)^2.
\end{multline*}
\end{proof}

\ifarxive
    The following technical lemma provides an algorithm for finding an approximate L\"owner ellipsoid in a format convenient for our purposes. For completeness, we provide the proof in \Cref{lm:proof:approx_LJ_ell}.
\else
    The following technical lemma provides an algorithm for finding an approximate L\"owner ellipsoid in a format convenient for our purposes. We have omitted the proof here owing to the page limit; the full proof is available in the arXiv version of the paper.
\fi
\begin{restatable}{lemma}{ApproxLJEllLm}\label{lm:approx_LJ_ell}
    Let $\AC \subseteq \QQ^k$ be a given finite $\BZero$-symmetric set of points in general position, and $\varepsilon \in \QQ_{>0}$ be a given rational number. 
    Let \(W\in\RR^{k\times k}\) be the symmetric positive-definite matrix
    such that
    \[
        \EC=\{x\in\RR^k : x^\top W x\le 1\}
    \]
    is the L\"owner ellipsoid of \(\AC\). Then, there exists a polynomial-time algorithm that computes a matrix $C \in \QQ^{k \times k}$ such that
    \begin{enumerate}
        \item $C$ is upper triangular with positive diagonal,
        
        \item for each $a \in \AC$, $a^\top C^{\top} C a \leq 1$,

        \item $\det C \geq e^{-2 \varepsilon} \cdot \det W^{1/2}$.
    \end{enumerate}
\end{restatable}

\section{Computation of the Preconditioning Matrix}\label{sec:precond}

The main idea of our approach, which is similar in spirit to the techniques used in the discrepancy minimization paper by Dadush et al.~\cite{VBalansingInAnyNorm}, is to precondition a given matrix $A \in \RR^{k \times n}$ by choosing a matrix $B \in \RR^{k \times k}$ with small determinant such that every column of $B^{-1}A$ has Euclidean norm bounded by an absolute constant. After rescaling, the matrix $B^{-1}A$ belongs to Koml\'os' setting, which yields $\herdisc_2\bigl(B^{-1}A\bigr) \leq 2\sqrt{k}$. A natural way to construct such a matrix $B$ is via the L\"owner ellipsoid. We refer to such matrices as \emph{LD-preconditioners}, where the letters LD abbreviate L\"owner and discrepancy.
\begin{definition}[LD-preconditioner]
    Let $A = (a_1,\dots,a_n) \in \RR^{k \times n}$ be a matrix of rank $k$. A nondegenerate matrix $B \in \RR^{k \times k}$ is called an \emph{LD-preconditioner of $A$} with signature $(\alpha,\beta) \in \RR^2_{> 0}$ if the following conditions are satisfied:
    \begin{enumerate}
        \item $\abs{\det B} \leq \alpha^{k + o(k)} \cdot \Delta(A)$;
        \item $\norm{B^{-1} a_i}_2 \leq \beta$ for every $i \in \intint n$.
    \end{enumerate}
\end{definition}

\begin{proposition}\label{prop:LD_herdisc}
    Let $A \in \RR^{k \times n}$ be a matrix of rank $k$. If $B$ is an LD-preconditioner of $A$ with signature $(\alpha, \beta)$, then $\herdisc_2\bigl(B^{-1} A\bigr) \leq  2 \beta \sqrt{k}$.
\end{proposition}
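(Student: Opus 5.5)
The plan is to rescale $B^{-1}A$ into Koml\'os' setting and then apply the hereditary bound \eqref{eq:BanDiscUBReduced}, using that discrepancy scales linearly.

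First, set $M = \frac{1}{\beta} B^{-1} A \in \RR^{k \times n}$. By the second condition in the definition of an LD-preconditioner, every column satisfies $\norm{M_{*i}}_2 = \frac{1}{\beta}\norm{B^{-1}a_i}_2 \leq 1$, so $M \in \KS_n$. The same holds for every column submatrix $M_{*\JC}$, since the Koml\'os condition is columnwise and therefore hereditary. The signature parameter $\alpha$, i.e.\ the determinant bound, plays no role here.

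Second, I bound $\herdisc_2(M)$. For any $\JC \subseteq \intint n$, the matrix $M_{*\JC}$ lies in Koml\'os' setting and has $k$ rows. Hence each of its column submatrices with at most $k$ columns has $\disc_2 \le \sqrt{k}/2$ by \eqref{eq:Disc2UB}. Lemma~\ref{lm:DiscLowk}, applied with $H = \sqrt{k}/2$, then gives $\disc_2(M_{*\JC}) \le \sqrt{k}$. This is exactly \eqref{eq:BanDiscUBReduced}, so $\herdisc_2(M) \le \sqrt{k}$.

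Third, I undo the scaling. For every $\JC$ and every $z$,
\[
\norm{B^{-1}A_{*\JC}(z - \tfrac12\BUnit)}_2 = \beta\,\norm{M_{*\JC}(z-\tfrac12\BUnit)}_2 .
\]
Therefore $\disc_2\bigl((B^{-1}A)_{*\JC}\bigr) = \beta\,\disc_2(M_{*\JC})$, and so $\herdisc_2(B^{-1}A) \le \beta\sqrt{k}$.

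This already implies the stated bound $2\beta\sqrt{k}$, with a factor of $2$ to spare. The slack presumably comes from the loose phrasing before the definition, which says $\herdisc_2 \le 2\sqrt{k}$. If one prefers not to rely on \eqref{eq:BanDiscUBReduced} as stated, the route through \eqref{eq:Disc2UB} and Lemma~\ref{lm:DiscLowk} above yields $\sqrt{k} \le 2\sqrt{k}$ either way.

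There is no real obstacle. The only point to check is that the reduction in Lemma~\ref{lm:DiscLowk} applies to each restriction $M_{*\JC}$, which holds because restrictions of Koml\'os matrices remain Koml\'os.
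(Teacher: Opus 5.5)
Your proof is correct and follows the paper's argument: rescale by $\beta^{-1}$ to land in Koml\'os' setting, then apply the hereditary $\ell_2$ bound. You derive that bound explicitly from \eqref{eq:Disc2UB} via Lemma~\ref{lm:DiscLowk}, which is exactly \eqref{eq:BanDiscUBReduced}, whereas the paper's one-line proof cites \eqref{eq:Disc2UB} directly; your observation that the argument actually gives $\beta\sqrt{k}$, so the stated $2\beta\sqrt{k}$ has a factor of $2$ to spare, is also right.
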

\begin{proof}
    By the definition, every column of $B^{-1} A$ has Euclidean norm at most $\beta$. Hence, the matrix $\beta^{-1}(B^{-1} A)$ belongs to the Koml\'os setting.
    Applying \Cref{eq:Disc2UB}, we obtain $\herdisc_2\bigl(B^{-1} A\bigr) \leq 2 \beta \sqrt{k}$.
\end{proof}

The next proposition proves the existence of LD-preconditioners.
\begin{proposition}\label{prop:good_B}
    Let $A \in \RR^{k \times n}$ be a matrix of rank $k$. Then there exists an LD-preconditioner $B$ of $A$ with signature $\bigl(\sqrt{e}, 1\bigr)$ such that $B$ is upper triangular with positive diagonal. Moreover, for a given rational matrix \(A\), such a matrix \(B\) can be computed in polynomial time.
\end{proposition}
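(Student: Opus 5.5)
The plan is to take the symmetrized column set $\AC = \{\pm a_1,\dots,\pm a_n\}$ and build $B$ from an approximate L\"owner ellipsoid of $\AC$ via \Cref{lm:approx_LJ_ell}. The output matrix $C$ will serve as $B^{-1}$, so we set $B := C^{-1}$.

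The lemma needs the points to be in general position. We only use this to guarantee that the L\"owner ellipsoid is full-dimensional, and $\rank(A) = k$ already ensures that. If the lemma genuinely requires more, we can apply a tiny perturbation or simply remove duplicate and zero points; I expect that to be routine. We also discard zero columns, since they satisfy condition 2 trivially and do not affect $\Delta(A)$.

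Next we apply \Cref{lm:approx_LJ_ell} to $\AC$ with $\varepsilon = 1/k$, or any $\varepsilon = o(k)$ with polynomial encoding. This yields, in polynomial time, an upper triangular $C$ with positive diagonal such that $a^\top C^\top C a = \norm{C a}_2^2 \le 1$ for all $a \in \AC$, and $\det C \ge e^{-2\varepsilon} \det W^{1/2}$. We set $B = C^{-1}$. The inverse of an upper triangular matrix with positive diagonal is again upper triangular with positive diagonal, and it is rational and computable in polynomial time. Condition 2 with $\beta = 1$ then holds immediately, because $\norm{B^{-1} a_i}_2 = \norm{C a_i}_2 \le 1$.

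For condition 1, we write $\abs{\det B} = 1/\det C \le e^{2\varepsilon} \cdot \det W^{-1/2}$. By \Cref{lm:ell_det} applied to the matrix whose columns are the elements of $\AC$, we have $\det W^{-1/2} \le e^{k/2} \cdot \Delta(A_{\pm})$, where $A_{\pm} = (A, -A)$. A $k\times k$ minor of $(A,-A)$ either uses a repeated column up to sign, in which case it vanishes, or equals $\pm$ a minor of $A$. Hence $\Delta(A_{\pm}) = \Delta(A)$. Putting these together gives $\abs{\det B} \le e^{2\varepsilon} \cdot (\sqrt e)^k \cdot \Delta(A)$, and with $\varepsilon = O(1)$ the factor $e^{2\varepsilon}$ is absorbed into $(\sqrt e)^{o(k)}$. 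So $B$ is an LD-preconditioner with signature $(\sqrt e, 1)$.

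I expect the main subtlety to be the general-position hypothesis. A symmetric set $\{\pm a_i\}$ is never in general position in the strict sense, so I will read that hypothesis as full-dimensionality, which is exactly what the L\"owner ellipsoid argument uses.
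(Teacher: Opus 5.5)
Your argument is correct and follows the paper's route: you take the L\"owner ellipsoid of $\{\pm a_i\}$, obtain $C$ from \Cref{lm:approx_LJ_ell}, set $B = C^{-1}$, and bound $\abs{\det B}$ via \Cref{lm:ell_det} using $\Delta\bigl((A,\,-A)\bigr) = \Delta(A)$; reading ``general position'' as full-dimensionality is also correct. The one omission is that \Cref{lm:approx_LJ_ell} is stated only for rational points, so for the existence claim with real $A$ you should use the exact L\"owner matrix $W = R^\top R$ (Cholesky factorization, with $R$ upper triangular with positive diagonal) and set $B = R^{-1}$, which is the $\varepsilon = 0$ case of your argument and is what the paper does before treating the rational case.
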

\begin{proof}
        Denote the columns of $A$ by $(a_1, a_2, \dots a_n)$. Let \(W\in\RR^{k\times k}\) be the symmetric positive-definite matrix
        such that
        \[
            \EC=\{x\in\RR^k : x^\top W x\le 1\}
        \]
        is the L\"owner ellipsoid of the set \(\{\pm a_i\}\). Let \(B\) be the Cholesky factor of \(W^{-1}\), so that \(B^\top B = W^{-1}\). Note that \(B\) is upper triangular with positive diagonal. By \Cref{lm:ell_det},
        \begin{equation}\label{eq:prop:good_B_1}
            \abs{\det B} = \det W^{-1/2} \leq e^{k/2} \cdot \Delta(A).
        \end{equation}
        Additionally, if $y$ is the $i$-th column of $B^{-1} A$, then
        \begin{equation}\label{eq:prop:good_B_2}
            \norm{y}_2^2 = a_i^\top B^{-\top} B^{-1} a_i \leq 1.
        \end{equation}
        
        \noindent Hence, $B$ is an LD-preconditioner of $A$ with signature $\bigl(\sqrt{e}, 1 \bigr)$. 
        
        Note that $B$ can be chosen as any upper triangular matrix with positive diagonal that satisfies \eqref{eq:prop:good_B_1} and \eqref{eq:prop:good_B_2}. Thus, for rational $A$, rather than taking \(B\) to be the exact Cholesky factor, we choose \(C\) using \Cref{lm:approx_LJ_ell} with \(\varepsilon = 1\), and set \(B = C^{-1}\); the same verification applies.
\end{proof}

    

For our purposes, we require a stronger notion of an LD-preconditioner, which we call the \emph{regular LD-preconditioner}. This stronger definition equips us with an algorithmic tool that allows us to efficiently enumerate integer points in convex bodies such as hypercubes and ellipsoids, while also preserving the low discrepancy of the preconditioned matrix $B^{-1} A$. We defer the details to \Cref{sec:enum_precond}. These enumeration algorithms play a key role in the development of our dynamic programming algorithms, which we use to prove our main \Cref{th:KomlosILP}. First, we introduce the notion of \emph{integrally regular matrices}.
\begin{definition}[Integrally Regular Matrix]\label{def:regular_matrix}
    A matrix $B$ is called \emph{$\gamma$-integrally regular} (for $\gamma > 0$) if there exists a unimodular matrix $U \in \ZZ^{k \times k}$ such that $U B$ is upper triangular with diagonal elements bounded below by $\gamma$.

    Or equivalently\footnote{The second version of this definition is equivalent to the first one for rational matrices. However, since the HNF is not defined for reals, the first generalizes the second.}\footnote{The equivalence follows from the fact that every upper triangular matrix with strictly positive diagonal can be reduced to an HNF with the same diagonal using only unimodular row operations.}, a matrix $B$ is called \emph{$\gamma$-integrally regular} if each diagonal element of the HNF of $B$ is bounded by $\gamma$ from below.
\end{definition}

\begin{definition}[Regular LD-preconditioner]\label{def:norm_LD}
    Let $A \in \RR^{k \times n}$ be a matrix of rank $k$. We say that $B$ is a \emph{regular LD-preconditioner} of $A$ with signature $(\alpha, \beta, \gamma)$ if
    \begin{itemize}
        \item $B$ is an LD-preconditioner of $A$ with signature $(\alpha,\beta)$;
        \item $B$ is a $\gamma$-integrally regular matrix.
    \end{itemize}
\end{definition}

The next lemma shows that every matrix containing the columns of the \(k \times k\) identity matrix admits a good regular LD-preconditioner.
\begin{lemma}\label{lm:good_B_norm}
    Let $A \in \RR^{k \times n}$ be a matrix of rank $k$ which contains a $k \times k$ identity submatrix. Then there exists a regular LD-preconditioner $B$ of $A$ with signature $\bigl(\sqrt{e}, 1, 1\bigr)$. Moreover, for rational $A$, such a matrix $B$ can be computed in polynomial time.
\end{lemma}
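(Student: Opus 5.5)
Take $B$ to be the upper triangular LD-preconditioner with positive diagonal given by \Cref{prop:good_B}, computed in polynomial time for rational $A$. It already has signature $(\sqrt{e},1)$, so the only thing left to check is that $B$ is $1$-integrally regular. We will show that $B$ itself, with $U = I$, has all diagonal entries at least $1$.

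The key input is the hypothesis that $A$ contains the columns $e_1,\dots,e_k$. Condition 2 of the LD-preconditioner applied to these columns gives $\norm{B^{-1} e_i}_2 \le 1$ for every $i$. Since $B$ is upper triangular with positive diagonal, $B^{-1}$ is upper triangular with diagonal entries $(B^{-1})_{ii} = 1/B_{ii}$. Hence
\[
1/B_{ii} = (B^{-1})_{ii} \le \norm{B^{-1} e_i}_2 \le 1,
\]
so $B_{ii} \ge 1$ for all $i$. Therefore $B$ (with $U=I$) is upper triangular with diagonal bounded below by $1$, which is exactly the first form of \Cref{def:regular_matrix} with $\gamma = 1$.

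In the rational case, $B = C^{-1}$ with $C$ from \Cref{lm:approx_LJ_ell}. This $C$ is upper triangular with positive diagonal, so $B$ is too, and the argument above applies verbatim. If one wants the HNF-based formulation, note that unimodular row operations reduce $B$ to an HNF with the same diagonal, so each HNF diagonal entry is also at least $1$.

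The main point to check is the orientation of triangularity. The bound must come from the $i$-th diagonal entry of $B^{-1}$ appearing as a coordinate of the $i$-th column $B^{-1}e_i$, and this holds for any triangular inverse. Beyond that, the proof is a direct verification, and the determinant bound is inherited unchanged from \Cref{prop:good_B}.
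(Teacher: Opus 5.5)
Your proposal is correct and matches the paper's proof essentially verbatim. You take the upper triangular $B$ from \Cref{prop:good_B}, use $U = I_k$, and apply the identity columns of $A$ to get $1/B_{ii} = (B^{-1})_{ii} \le \norm{B^{-1}e_i}_2 \le 1$.
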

\begin{proof}
        Let $B$ be an LD-preconditioner of $A$ constructed using \Cref{prop:good_B}.  It is upper triangular with positive diagonal. Thus \(B\) is $1$-integrally regular by taking \(U=I_k\) in
        \Cref{def:norm_LD}. Indeed, since $A$ contains the identity matrix and since $B$ has the signature $\bigl(\sqrt{e}, 1\bigr)$, we have
        \begin{equation*}
            0 < \frac{1}{B_{i i}} = \left( B^{-1}\right)_{i i} \leq \norm{B^{-1} e_i}_2 \leq 1,
        \end{equation*}
        which completes the proof.
\end{proof}

The next theorem constructs, in polynomial time, a rational regular LD-preconditioner for an arbitrary integer matrix of rank $k$. 
\begin{restatable}{theorem}{GoodBConstructiveTh}\label{th:good_B_constructive}
    Let $A \in \ZZ^{k \times n}$ be a matrix of rank $k$. Then there exists a regular LD-preconditioner $B$ of $A$ with signature $\bigl(\sqrt{e}, 1, 1\bigr)$. Moreover, such a matrix $B$ can be computed in polynomial time.
\end{restatable}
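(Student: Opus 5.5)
The plan is to start from the LD-preconditioner of \Cref{prop:good_B} and modify it by an orthogonal change of frame. Such a change preserves both the column norms of $B^{-1}A$ and $\abs{\det B}$. The frame is chosen so that the lattice $B^{-1}\ZZ^k$ contains a triangular sublattice with diagonal at most $1$, and this will give $1$-integral regularity.

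\textbf{Step 1: reformulate integral regularity.} If $UB=H$ with $U$ unimodular and $H$ upper triangular, then $B^{-1}\ZZ^k = H^{-1}\ZZ^k$. Here $H^{-1}$ is upper triangular with diagonal entries $1/H_{ii}$. So $B$ is $1$-integrally regular if and only if the lattice $L := B^{-1}\ZZ^k$ has an upper triangular basis with diagonal entries in $(0,1]$. In that basis, the $i$-th diagonal entry is the positive generator of the projection of $L\cap \mathrm{span}(e_1,\dots,e_i)$ onto the $e_i$-axis.

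\textbf{Step 2: the triangular sublattice and monotonicity.} Suppose $L$ contains a full-rank sublattice $R\ZZ^k$ with $R$ upper triangular and $0<R_{ii}\le 1$. Then $L\cap\mathrm{span}(e_1,\dots,e_i)$ contains $R\ZZ^k\cap \mathrm{span}(e_1,\dots,e_i)$. Hence the $i$-th diagonal generator of $L$ divides $R_{ii}$ (the corresponding 1D lattice of $L$ contains that of the sublattice), and so it is at most $1$.

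\textbf{Step 3: the construction.} Let $B_0$ be given by \Cref{prop:good_B}, so $\abs{\det B_0}\le e^{k/2+O(1)}\Delta(A)$ and all columns of $B_0^{-1}A$ have norm at most $1$. Since $A$ is an integer matrix, its columns lie in $\ZZ^k$, so the columns of $B_0^{-1}A$ lie in $L_0=B_0^{-1}\ZZ^k$. Pick a set $\JC$ of $k$ linearly independent columns, for instance by Gaussian elimination, and take the QR-decomposition $B_0^{-1}A_{*\JC}=QR$. Here $Q$ is orthogonal and $R$ is upper triangular with positive diagonal. The diagonal entries $R_{ii}$ are Gram--Schmidt lengths, so $R_{ii}\le \norm{B_0^{-1}a_{j_i}}_2\le 1$.

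Set $B=B_0Q$. Then $\abs{\det B}=\abs{\det B_0}$, and $\norm{B^{-1}a_i}_2=\norm{Q^\top B_0^{-1}a_i}_2\le 1$, so $B$ keeps signature $(\sqrt e,1)$. Moreover $B^{-1}A_{*\JC}=R$, so $B^{-1}\ZZ^k\supseteq R\ZZ^k$. By Steps 1--2, $B$ is $1$-integrally regular.

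\textbf{Main obstacle: rationality and polynomial time.} The exact $Q$ involves square roots, so $B$ must be replaced by a rational approximation without breaking either inequality. I would first scale $B_0$ by a factor $(1+\delta)$ with $\delta=1/\poly$. This costs only a factor $(1+\delta)^k=O(1)$ in the determinant, which is absorbed into the $o(k)$ slack, and it gives strict slack: column norms at most $1/(1+\delta)$ and $R_{ii}\le 1/(1+\delta)$. Then I would compute a rational approximation $\tilde R$ of $R$ that is upper triangular, with diagonal in $(0,1]$, and close enough that $\tilde Q:=B_0^{-1}A_{*\JC}\tilde R^{-1}$ is almost orthogonal, i.e.\ $\norm{\tilde Q^\top\tilde Q - I}$ is polynomially small. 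A rational Cholesky-type approximation of the Gram matrix $A_{*\JC}^\top B_0^{-\top}B_0^{-1}A_{*\JC}$ computed to polynomially many bits would serve. Setting $B=B_0\tilde Q$ gives $B^{-1}A_{*\JC}=\tilde R$ exactly, so integral regularity holds exactly. The near-orthogonality of $\tilde Q$ perturbs column norms and the determinant only by a $1+O(\delta)$ factor, which the slack absorbs. Verifying these perturbation bounds with polynomial bit-length is the step I expect to require the most care; everything else is exact lattice reasoning.
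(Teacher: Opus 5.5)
Your argument is correct. In exact arithmetic it even produces the same matrix as the paper, though by a different route. Since $B^{-1}A_{*\JC}=R$, your preconditioner is $B=A_{*\JC}R^{-1}$: a basis submatrix times an upper triangular matrix with diagonal entries $1/R_{ii}\ge 1$.

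The paper reaches this shape by changing coordinates instead of rotating. It applies the L\"owner construction to $A_{\BC}^{-1}A$, where the basis columns become $e_1,\dots,e_k$, and obtains an upper triangular $T$ with $1/T_{ii}=(T^{-1})_{ii}\le\norm{T^{-1}e_i}_2\le 1$. This is your Gram--Schmidt bound $R_{ii}\le\norm{B_0^{-1}a_{j_i}}_2$ in other clothing. It then sets $B'=A_{\BC}T$. With exact L\"owner ellipsoids and $\BC=\JC$, affine equivariance of the L\"owner ellipsoid and uniqueness of a positive-diagonal triangular factor give $T=R^{-1}$.

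Regularity is then certified by the HNF $A_{\BC}=U'H_1$: $U'^{-1}B'=H_1T$ is upper triangular with diagonal $(H_1)_{ii}T_{ii}\ge 1$. Your lattice/flag argument in Steps 1--2 describes the same fact dually. As written, however, it tacitly assumes that $B^{-1}\ZZ^k$ has an upper triangular basis at all, which is not automatic for real lattices. It does hold here, e.g.\ since $R\ZZ^k\subseteq B^{-1}\ZZ^k\subseteq d^{-1}R\ZZ^k$ with $d=\abs{\det A_{*\JC}}$. The one-line identity $U'^{-1}B=H_1R^{-1}$ settles it directly.

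The real advantage of the paper's route is rationality for free. \Cref{lm:approx_LJ_ell}, applied in the transformed coordinates, already returns a rational triangular factor, so neither $Q$ nor square roots ever appear.

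Your perturbation plan does work. It becomes cleaner if you never form $\tilde Q$ and reuse the rounding trick of \Cref{lm:approx_LJ_ell}:
\begin{itemize}
\item Write the rational Gram matrix $G=(B_0^{-1}A_{*\JC})^\top B_0^{-1}A_{*\JC}$ as $G=L^\top DL$ with $L$ unit upper triangular, so that $R=D^{1/2}L$.
\item Pick rationals with $e^{-\varepsilon/k}\sqrt{D_{ii}}\le D'_{ii}\le\sqrt{D_{ii}}$.
\item Put $\tilde R=D'L$ and $B=A_{*\JC}\tilde R^{-1}$.
\end{itemize}
Then $B^{-1}a_i=D'D^{-1/2}Q^\top B_0^{-1}a_i$, so every column norm stays at most $1$. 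Moreover $\tilde R_{ii}\le R_{ii}\le 1$, $B^{-1}A_{*\JC}=\tilde R$ exactly, and $\abs{\det B}\le e^{\varepsilon}\abs{\det B_0}$. No $(1+\delta)$ rescaling is needed.
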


\begin{proof}
        Denote the columns of $A$ by $(a_1,\dots,a_n)$ and let $\BC$ be an arbitrary columns base of $A$. Such a base could be constructed by a polynomial-time algorithm (see for example \cite[Section 3.1]{Schrijver}). Denote $\delta = \abs{\det A_{\BC}}$ and $\Delta = \Delta(A)$. 
        Then 
        \[
            \Delta(A_{\BC}^{-1}A) = \frac{\Delta}{\delta},
        \] and $A_{\BC}^{-1}A$ contains an identity submatrix.

        Applying \Cref{lm:good_B_norm} to $A_{\BC}^{-1}A$, we obtain in polynomial time a rational regular LD-preconditioner $B$ of $A_{\BC}^{-1}A$ with signature $\bigl(\sqrt{e}, 1, 1\bigr)$. Hence
        \begin{gather*}
            \abs{\det B} \leq e^{k/2 + o(k)} \cdot \Delta(A_{\BC}^{-1}A) = e^{k/2 + o(k)} \cdot \frac{\Delta}{\delta}, \quad\text{and}\\
            \forall i \in \intint{n}, \quad \norm{B^{-1}(A_{\BC}^{-1} a_i)}_2 \leq 1.
        \end{gather*}

        \noindent Let $B' = A_{\BC} B$. Then, for every $i \in \intint n$,
        \begin{gather*}
            \norm{(B')^{-1} a_i}_2 = \norm{B^{-1}(A_{\BC}^{-1} a_i)}_2 \leq 1,\quad \text{and} \\
            \abs{\det B'} = \abs{\det A_{\BC}} \cdot \abs{\det B} \leq e^{k/2 + o(k)} \cdot \Delta.
        \end{gather*}
        
        \noindent Thus $B'$ is an LD-preconditioner of $A$ with signature $\bigl(\sqrt{e}, 1\bigr)$. Since $B$ is $1$-integrally regular, we can put
        \[
            B = U H,
        \]
        where $U \in \ZZ^{k \times k}$ is unimodular, $H \in \QQ^{k \times k}$ is upper triangular, and the diagonal elements of $H$ are bounded by $1$ from below. Then
        \begin{equation*}
            B'
                = A_{\BC} B = A_{\BC} U H \underset{\substack{A_{\BC} U = U'H_1\\\text{HNF decomposition}}}{=}
                U' H_1 H.
        \end{equation*}
        The matrix $H_1 H$ is upper triangular. Since the diagonal of $H_1$ consists of positive integers and the diagonal of $H$ is bounded by $1$ from below, the diagonal of $H_1 H$ is also bounded by $1$ from below.
        Therefore, $B'$ is the required regular LD-preconditioner of $A$ with signature $\bigl(\sqrt{e}, 1, 1\bigr)$. 
\end{proof}

\section{Enumeration in Parallelepipeds and Ellipsoids via Integrally Regular Matrices}\label{sec:enum_precond}

As already noted earlier, an important part of the dynamic programming algorithm required to prove the main result \Cref{th:KomlosILP} is an algorithm for enumerating integer points in ellipsoids. The algorithm is based on covering the ellipsoid by parallelepipeds and enumerating integer points inside them. The first lemma connects the known \Cref{cor:ParQEnum} for enumeration in parallelepipeds with introduced notion of \emph{integrally regular matrices}.

\begin{proposition}\label{prop:LD_enum_par}
    Let $B \in \QQ^{k \times k}$ be a $\gamma$-integrally regular matrix. Then, for every $t \in \QQ^k$, the points of
    \[
        \bigl(t + B \cdot [0,1)^k\bigr) \cap \ZZ^k
    \]
    can be enumerated using
    \begin{equation*}
        O(k^2) \cdot \left(1 + \frac{1}{\gamma}\right)^k \cdot \abs{\det B} \quad\text{operations.}
    \end{equation*}
\end{proposition}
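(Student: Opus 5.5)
The plan is to reduce the statement to \Cref{cor:ParQEnum} through a unimodular change of coordinates. Since $B$ is $\gamma$-integrally regular, \Cref{def:regular_matrix} gives a unimodular $U \in \ZZ^{k \times k}$ such that $H := U B$ is upper triangular with $H_{ii} \geq \gamma$ for all $i$. Because $B$ is rational, we will take $U$ and $H$ from the HNF of $B$, which the near-optimal algorithms cited in \Cref{sec:HNF} compute in polynomial time. The footnote to \Cref{def:regular_matrix} ensures the HNF diagonal is still bounded below by $\gamma$. We also have $\abs{\det H} = \abs{\det B}$.

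The key observation is that the linear map $x \mapsto U x$ is a bijection of $\ZZ^k$ onto itself, since $U$ is unimodular. It therefore sends
\[
    \bigl(t + B \cdot [0,1)^k\bigr) \cap \ZZ^k
\]
bijectively onto
\[
    \bigl(U t + U B \cdot [0,1)^k\bigr) \cap \ZZ^k = \bigl(r + H \cdot [0,1)^k\bigr) \cap \ZZ^k,
\]
where $r := U t \in \QQ^k$. So the procedure is as follows. First compute $U$, $H$ and $r$. Then enumerate the integer points $y$ of the parallelepiped $r + H [0,1)^k$ using item~2 of \Cref{cor:ParQEnum}, which costs $O(k^2) \cdot (1 + 1/\gamma)^k \cdot \abs{\det H}$ operations. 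Finally, output $x = U^{-1} y$ for each such $y$.

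The only point needing care is the bookkeeping cost. Mapping each enumerated point back through $U^{-1}$ costs $O(k^2)$ operations per point. By item~1 of \Cref{cor:ParQEnum}, there are at most $(1+1/\gamma)^k \abs{\det B}$ points, so this adds at most the same amount, and the enumeration bound already dominates it. The preprocessing (the HNF computation, forming $U^{-1}$, and computing $r$) is polynomial in the input size. It does not depend on the number of points, so in the stated operation count it is treated as a one-time overhead, or absorbed by assuming $B$ is given together with $U$. Alternatively, we can avoid multiplying by $U^{-1}$ altogether by recursing inside the triangular enumeration of \Cref{lm:ParQEnum} and applying $U^{-1}$ incrementally. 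The main (mild) obstacle is exactly this accounting of the per-point back-transformation, and the bound above resolves it.
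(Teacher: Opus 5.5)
Your proposal is correct and follows the paper's own proof. Both pass to $H = UB$ via the polynomial-time HNF, use that $x \mapsto Ux$ maps $\bigl(t + B[0,1)^k\bigr)\cap\ZZ^k$ bijectively onto $\bigl(Ut + H[0,1)^k\bigr)\cap\ZZ^k$, apply \Cref{cor:ParQEnum}, and finish with $\abs{\det H} = \abs{\det B}$. Your extra accounting of the $O(k^2)$ per-point back-transformation through $U^{-1}$ and of the one-time HNF preprocessing is a detail the paper leaves implicit, and you handle it soundly.
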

\begin{proof}
        By the definition of a $\gamma$-integrally regular matrix, there exists a unimodular matrix $U \in \ZZ^{k \times k}$ such that $H = U B$ is upper triangular with diagonal elements bounded by $\gamma$ from below. By \Cref{sec:HNF}, such a matrix can be computed by a polynomial-time algorithm. Denoting $t' = Ut$, the map $x \to U x$ is a bijective map from $\bigl(t + B \cdot [0,1)^k\bigr) \cap \ZZ^k$ to $\bigl(t' + H \cdot [0,1)^k\bigr) \cap \ZZ^k$. By \Cref{cor:ParQEnum}, the points of $\bigl(t' + H \cdot [0,1)^k\bigr) \cap \ZZ^k$ can be enumerated using
        \begin{equation*}
            O(k^2) \cdot \left(1 + \frac{1}{\gamma}\right)^k \cdot \abs{\det H} \quad\text{operations.}
        \end{equation*}
        Since $U$ is unimodular, we have $\abs{\det H} = \abs{\det B}$, which finishes the proof.
        
\end{proof}

For our purposes, we need an algorithm to enumerate integer points in ellipsoids. However, not every ellipsoid is convenient for enumeration. For the sake of clarity, we provide the following definition.
\begin{definition}\label{def:enumerable_matrix}
    Let $B \in \QQ^{k \times k}$ be a non-degenerate matrix. We say that $B$ is \emph{$\alpha$-enumerable} if, for every translation vector $t \in \QQ^k$ and rational radius $r \geq \sqrt{k}$, the integer points inside the ellipsoid $\bigl(t + r B \cdot \BB_2^k \bigr)$ can be enumerated in
    \begin{equation*}
        \alpha^{k + o(k)} \cdot \left(\frac{r}{\sqrt{k}}\right)^k \cdot \abs{\det B}
    \end{equation*}
    arithmetic operations with rational numbers whose bit-encoding size is bounded by the bit-encoding sizes of $B$, $t$, and $r$.
\end{definition}

The following \Cref{prop:LD_enum_Ell} states that $\gamma$-integrally regular matrices are $\alpha$-enumerable for some fixed $\alpha$.
\begin{proposition}\label{prop:LD_enum_Ell}
    Let $B \in \QQ^{k \times k}$ be a $\gamma$-integrally regular matrix. Then $B$ is $\alpha$-enumerable with
    \begin{equation*}
        \alpha = \left( 1 + \frac{1}{\gamma} \right) \cdot \bigl(3 \sqrt{2 \pi e}\bigr).
    \end{equation*}
\end{proposition}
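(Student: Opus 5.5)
We cover the ellipsoid $t + rB\cdot\BB_2^k$ by translates of the fundamental parallelepiped $B\cdot[0,1)^k$ and enumerate the integer points of each translate with \Cref{prop:LD_enum_par}, discarding points that fall outside the ellipsoid. It is convenient to pass to the coordinates $y = B^{-1}(x - t)$. The ellipsoid becomes the Euclidean ball $r\BB_2^k$, and the parallelepipeds $B(z + [0,1)^k)$ with $z \in \ZZ^k$ become unit cubes $z + [0,1)^k$. These cubes tile $\RR^k$, so the ellipsoid is covered by the translates $t + B(z+[0,1)^k)$ over those $z\in\ZZ^k$ whose unit cube meets $r\BB_2^k$. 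Each such cube lies in $(r+\sqrt{k})\BB_2^k$, because its diameter is $\sqrt{k}$. For every integer point found we check the quadratic inequality $\norm{B^{-1}(x-t)}_2 \le r$ in $O(k^2)$ operations. Since the cubes are disjoint, no point is reported twice.

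The number $N$ of relevant cubes is bounded by a volume argument: the cubes are disjoint, have unit volume, and all lie in $(r+\sqrt{k})\BB_2^k$, so
\[
N \le \vol\bigl((r+\sqrt k)\BB_2^k\bigr) \le \vol\bigl(2r\BB_2^k\bigr) = \frac{\pi^{k/2}(2r)^k}{\Gamma(k/2+1)},
\]
where the second inequality uses $r \ge \sqrt{k}$. By Stirling, $\Gamma(k/2+1) \ge (k/(2e))^{k/2}$, which gives $N \le (2\sqrt{2\pi e})^k (r/\sqrt k)^k$. By \Cref{prop:LD_enum_par}, each cube costs $O(k^2)(1+1/\gamma)^k\abs{\det B}$ operations. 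The total is therefore $\poly(k)\cdot\bigl((1+1/\gamma)\cdot 2\sqrt{2\pi e}\bigr)^k (r/\sqrt k)^k\abs{\det B}$, and the polynomial factors are absorbed into $\alpha^{o(k)}$. This already gives the constant $2\sqrt{2\pi e}$, which is better than the claimed $3\sqrt{2\pi e}$. The slack pays for the slightly cruder enumeration of cubes described next.

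The step I expect to be the main obstacle is producing the list of relevant cubes $z$ efficiently with rational arithmetic, at a cost within the same bound. My plan is to enumerate all $z \in \ZZ^k$ with $\norm{z}_2 \le r + \sqrt{k}$ by a coordinate-by-coordinate recursion. For each prefix $(z_1,\dots,z_j)$ we iterate over integers $z_{j+1}$ with $\abs{z_{j+1}} \le \sqrt{R^2 - \sum_{i\le j} z_i^2}$, taking $R = r + \sqrt{k}$, and we round the square roots outward using rational upper bounds. I would bound the number of visited prefixes by comparing them with the integer points of slightly enlarged lower-dimensional balls. Every such point carries a cube inside a ball of radius $R + \sqrt k \le 3r$, so the count is at most $\sum_j \vol(3r\BB_2^j)$, which is $\poly(k)\cdot(3\sqrt{2\pi e})^k(r/\sqrt k)^k$. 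This explains the constant in the statement. Unlike the outward enlargement, a vacuous branch (one with an empty range for $z_{j+1}$) is no problem, because it costs $O(1)$ and is charged to its parent prefix.

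Finally, each translate is $t + Bz + B[0,1)^k$, so \Cref{prop:LD_enum_par} applies to it with translation vector $t + Bz$. This vector is rational and its size is polynomially bounded, which yields the required bit-size guarantees.
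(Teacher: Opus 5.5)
Your proof is correct and follows the paper's strategy. Like the paper, you pull the ellipsoid back to $r\BB_2^k$, cover it by integer unit cubes, push these forward to parallelepipeds $t+Bz+B\cdot[0,1)^k$ enumerated via \Cref{prop:LD_enum_par}, and count cubes by a volume/Stirling estimate using $r\ge\sqrt{k}$. The differences are bookkeeping: the paper takes the $2^k$ cubes incident to each lattice point of $r\BB_2^k$ (so $3\sqrt{2\pi e}=2\cdot\tfrac32\sqrt{2\pi e}$ comes from $r+\sqrt{k}/2\le\tfrac32 r$) and merely asserts those lattice points can be listed, whereas your disjoint tiling avoids duplicate reports and your recursive enumeration of all $z$ with $\norm{z}_2\le r+\sqrt{k}$ actually justifies that step, with the same constant arising from $r+2\sqrt{k}\le 3r$.
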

\begin{proof}
    The following simple inequalities can be used to estimate the number of lattice points inside $r \BB_2^k$:
    \begin{equation*}\label{eq:latp_in_ball}
        \vol \left( \left(r - \frac{\sqrt{k}}{2}\right) \cdot \BB_2^k \right) \leq \abs{r \BB_2^k \cap \ZZ^k} \leq \vol \left( \left(r+\frac{\sqrt{k}}{2}\right) \cdot \BB_2^k \right).
    \end{equation*}
    Using $r \geq \sqrt{k}$ from \Cref{def:enumerable_matrix} of $\alpha$-enumerable matrix, we obtain
    \begin{multline}\label{eq:ints_in_ball}
        \abs{r \BB_2^k \cap \ZZ^k} \leq \vol (\BB_2^k) \cdot \left(r + \frac{\sqrt{k}}{2}\right)^k \leq \vol (\BB_2^k) \cdot r^k \cdot (3/2)^k = \\
        = \left( \frac{r}{\sqrt{k}} \right)^k \cdot (2 \pi e)^{k/2} \cdot (3/2)^{k + o(k)} = \left( \frac{r}{\sqrt{k}} \right)^k \cdot \left( \frac{3 \sqrt{\pi e}}{\sqrt{2}} \right)^{k + o(k)}.
    \end{multline}
    Note that the integer points inside $r \BB_2^k$ can be enumerated with the number of operations coinciding with \eqref{eq:ints_in_ball}. 
    

    Now we construct a cover of $r \BB_2^k$ by integer translates of $[0,1)^k$ in the following way. For each integer point inside $r \BB_2^k$, we add $2^k$ translates of $[0,1)^k$ with vertices incident to this point. We claim that this collection covers $r \BB_2^k$. Indeed, assume that there exists an uncovered point $x \in r \BB_2^k$. Without loss of generality, we can assume that $x \geq \BZero$. Since $\norm{\lfloor x \rfloor}_2 \leq \norm{x}_2$, the point $\lfloor x \rfloor$ also lies inside $r \BB_2^k$. But then $x$ is covered by a translated copy of $[0,1)^k$ incident to $\lfloor x \rfloor$, which proves the claim.

    Thus, we have effectively constructed a family of integer translates of $[0,1)^k$ that covers $r \BB_2^k$. By \eqref{eq:ints_in_ball}, this family can be constructed in
    \begin{equation}\label{eq:ball_cube_cover}
        \left( \frac{r}{\sqrt{k}} \right)^k \cdot \left( 3 \sqrt{2 \pi e} \right)^{k + o(k)} \quad\text{operations.}
    \end{equation}

    Finally, we explain how to enumerate integer points inside $t + r B \cdot \BB_2^k$. Since $r \BB_2^k$ is covered by integer translates of $[0,1)^k$, the ellipsoid $t + r B \cdot \BB_2^k$ is covered by integer translates of $t + B \cdot [0,1)^k$; their number is bounded by \eqref{eq:ball_cube_cover}. The enumeration of integer points in each translate is handled by \Cref{prop:LD_enum_par}. The total enumeration complexity is the product of these two bounds. It satisfies the conditions of \Cref{def:enumerable_matrix} for an $\alpha$-enumerable matrix, with $\alpha$ as in the statement of the theorem.
\end{proof}


\section{Proof of \Cref{th:KomlosILP}}\label{sec:proof:KomlosILP}

Recall the statement of \Cref{th:KomlosILP}:
\ThKomlosILP*

\noindent To prove \Cref{th:KomlosILP}, we first construct the dynamic programming algorithm described in \Cref{sec:DP}, and then in \Cref{sec:DP_param_setting} we instantiate its parameters using the construction of a rational regular LD-preconditioner from \Cref{sec:precond}.


\subsection{Dynamic program}\label{sec:DP}

In the seminal work \cite{DiscConvILP}, K.~Jansen \& L.~Rohwedder introduced a new class of dynamic programming algorithms for ILP problems, which utilizes results from discrepancy theory and fast algorithms for tropical and Boolean convolutions on sequences. Our dynamic programming algorithm follows the same pattern but exhibits significant differences, as it employs matrix preconditioning (see \Cref{sec:precond}), enumeration of integer points in ellipsoids (see \Cref{sec:enum_precond}), and handles a more general Boolean convolution problem (see \Cref{sec:bool_conv}). Unfortunately, we do not know any accelerated algorithm for solving the tropical variant of the generalized convolution arising in the optimization problem.

\begin{restatable}{theorem}{DPTh}\label{th:DP}
    Consider the \eqref{ILP-SF} problem. Let $\rho \in \ZZ_{>0}$ be such that $\norm{z^*}_1 \leq (6/5)^\rho$, for some optimal integer solution $z^*$ of the problem. Assume in addition that the following hold for a given nondegenerate matrix \(B \in \QQ^{k \times k}\):
    \begin{enumerate}
        \item There exists a constant $\alpha$ such that $B$ is $\alpha$-enumerable, according to \Cref{def:enumerable_matrix};
        
    
        \item Let $M = B^{-1} \cdot A$. We assume that $\eta \in \ZZ_{\geq 1}$ is a given upper bound on $\herdisc_2(M)$.
    \end{enumerate}
    
    \noindent Under these assumptions on $B$, and denoting $\delta = \abs{\det B}$, the problem can be solved in
    $$
        \rho \cdot \alpha^{2 k + o(k)} \cdot \left(\frac{4 \eta}{\sqrt{k}}\right)^{2k} \cdot \delta^2 + \poly(\inputsize)
    $$ arithmetic operations. The feasibility variant of the problem can be solved in
    $$
        \rho \cdot \alpha^{k + o(k)} \cdot \left(\frac{4 \eta}{\sqrt{k}}\right)^{k} \cdot \delta \cdot \poly(\inputsize,\log \eta)
    \quad \text{bit operations.}
    $$ 
\end{restatable}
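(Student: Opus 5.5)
We follow the Jansen--Rohwedder scheme, with the $\ell_\infty$-box of radius $\herdisc_\infty$ replaced by the ellipsoid $\{b' : \norm{B^{-1}b'}_2 \le r\}$ and $\ell_\infty$-discrepancy replaced by $\herdisc_2(M)$. Throughout, $x^*$ denotes an optimal solution of the LP relaxation.

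\textbf{Step 1: a decomposition chain.} Fix an optimal integer solution $z^*$ with $\norm{z^*}_1 \le (6/5)^\rho$. Apply the second part of \Cref{lm:disc_rounding} to $M = B^{-1}A$ with the norm $\norm{\cdot}_2$, repeatedly. This produces a chain $z^* = z_\rho \ge z_{\rho-1} \ge \dots \ge z_0$, where each $z_{i-1}$ is obtained from $z_i$ and the sizes satisfy $\norm{z_i}_1 \le (6/5)^i$. The first part of the lemma handles the final steps once $\norm{z_i}_1\le 1$. Each step satisfies
\[
  \norm{B^{-1}A z_{i-1} - \tfrac12 B^{-1}A z_i}_2 \le 2\eta .
\]
Unrolling the recursion gives, for every $i$, the bound
\[
  \norm{B^{-1}(A z_i - 2^{i-\rho} b)}_2 \le 4\eta .
\]
Hence every intermediate right-hand side $A z_i$ is an integer point of the ellipsoid $\EC_i = 2^{i-\rho} b + 4\eta\, B\cdot \BB_2^k$. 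This is where the factor $4\eta$ comes from.

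\textbf{Step 2: the dynamic program.} For each level $i$ we build a table
\[
  T_i(b') = \max\{c^\top z : Az=b',\ z\in\ZZ^n_{\ge 0},\ \norm{z}_1 \le (6/5)^i\}
\]
over all $b'\in \EC_i\cap\ZZ^k$. By \Cref{def:enumerable_matrix}, applied with $r=4\eta\ge\sqrt k$ (if $\eta<\sqrt k/4$, we enlarge $\eta$ to $\lceil \sqrt{k}/4\rceil$), these points are enumerable in $\alpha^{k+o(k)}(4\eta/\sqrt k)^k\delta$ operations. That is also a bound on the table size.

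The transition sets
\[
  T_i(b') = \max\{T_{i-1}(b_1)+T_{i-1}(b'-b_1)\},
\]
where both arguments range over $\EC_{i-1}$. This is a max-plus convolution, and evaluating it naively over pairs costs the square of the table size, which gives the $\alpha^{2k}(4\eta/\sqrt k)^{2k}\delta^2$ term per level. Summing over $\rho$ levels, and adding the base level (single columns or zero) and $\poly(\inputsize)$ preprocessing, gives the optimization bound.

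\textbf{Step 3: correctness.} The chain shows that the optimal value of $z^*$ is reproduced, since each $z_i$ splits as $z_{i-1} + (z_i - z_{i-1})$ with both parts obeying the same ellipsoid bound at level $i-1$. Soundness is immediate because table entries are values of actual feasible vectors.

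\textbf{Step 4: feasibility.} For the feasibility variant, the tables become Boolean and the transition is a sparse Boolean convolution. We solve it by \Cref{lm:bool_conv} in time linear in $T$ up to $k\log L$ factors. Here $T$ is bounded by the number of integer points in the ellipsoids (the support of the output can be intersected with $\EC_i$ afterwards, but $\supp(\gamma)$ lies in $\EC_{i-1}+\EC_{i-1}\subseteq 2^{i-\rho}b+8\eta B\BB_2^k$, which is only $2^k$ larger). The numbers involved have $\poly(\inputsize,\log\eta)$ bits, which gives the stated feasibility bound.

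\textbf{Main obstacle.} The main obstacle is centering: the $2^{i-\rho}b$ are fractional and the halving picks up errors that must telescope into the geometric bound $4\eta$, not grow with $\rho$. I would carry the invariant $\norm{B^{-1}(Az_i)-2^{i-\rho}B^{-1}b}_2\le 4\eta$ inductively as above, using the triangle inequality and the factor $1/2$ contraction.
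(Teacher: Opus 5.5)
Your scheme is the paper's: the same state sets $\ZZ^k\cap\bigl(2^{i-\rho}b+4\eta B\cdot\BB_2^k\bigr)$, the same telescoping that produces the radius $4\eta$, the same naive max-plus transition, and the same sparse Boolean convolution. The correctness argument, however, has two holes.

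First, a chain $z_\rho\ge z_{\rho-1}\ge\dots\ge z_0$ is not enough. To conclude that level $i$ captures $c^\top z_i$, level $i-1$ must capture both $c^\top z_{i-1}$ and $c^\top(z_i-z_{i-1})$, but the complement $z_i-z_{i-1}$ is never decomposed further. You have to split every piece recursively, which gives a binary tree of depth $\rho$. For a node $x$ at level $i$, write $d_x=Mx-2^{i-\rho}B^{-1}b$. Both children of $x$ satisfy $d_{\mathrm{child}}=d_x/2\pm M(z-x/2)$, so $\norm{d}_2\le 4\eta$ holds at every node, not only along one path. The paper avoids fixing a tree: it proves $\DP[i,b']\ge\operatorname{OPT}(i,B^{-1}b')$ for \emph{every} state, by splitting an optimal solution of $\mathcal P(i,u)$ once and using that $u\in\mathcal U_j$ forces both children into $\mathcal U_{j+1}$.

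Second, the recursion does not compute the norm-bounded optimum you define as $T_i(b')$. Two vectors of $\ell_1$-norm at most $(6/5)^{i-1}$ can sum to a vector of norm up to $2(6/5)^{i-1}>(6/5)^i$, so a computed entry can exceed your $T_i(b')$. Drop that definition and argue with the one-sided invariants \eqref{eq:first_inv} and \eqref{eq:second_inv}, as the paper does. Your soundness sentence is exactly \eqref{eq:second_inv}, with witnesses that need not respect the norm bound.

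For feasibility, you correctly observe that $\supp(\gamma)\subseteq 2^{i-\rho}b+8\eta B\cdot\BB_2^k$. The only count the $\alpha$-enumerability hypothesis gives for that ellipsoid is $\alpha^{k+o(k)}(8\eta/\sqrt{k})^k\delta$. This is a factor $2^k$ above the claimed per-level cost, and $2^k$ is neither $\alpha^{o(k)}$ nor $\poly(\inputsize,\log\eta)$. As written, your argument therefore proves the feasibility bound only with $\alpha$ replaced by $2\alpha$, not the stated one.

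The paper's running-time paragraph simply uses the table size for $T$ in \Cref{lm:bool_conv} and never discusses $\supp(\gamma)$, so it does not settle this point either. Closing it needs one of the following:
\begin{itemize}
  \item a convolution whose cost depends only on the output restricted to the next state set;
  \item a sharper direct lattice-point count for the concrete preconditioner;
  \item stating the weaker bound instead.
\end{itemize}
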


\begin{proof}
Set
\[
M = B^{-1}A
\qquad\text{and}\qquad
v = B^{-1}b.
\]
Then $Ax=b$ is equivalent to
\[
Mx=v,\qquad x\in\ZZ_{\geq 0}^n.
\]
Moreover, any feasible right-hand side $u$ of a system $Mx=u$ belongs to $B^{-1} \cdot \ZZ^k$, because
\[
u = Mx = B^{-1}Ax \in B^{-1} \cdot \ZZ^k.
\]

\proofsubparagraph*{Subproblems.}
For $i \in \{0,\dots,\rho\}$ and $u \in B^{-1} \cdot \ZZ^k$, let $\mathcal P(i,u)$ be the problem
\begin{gather*}
    c^\top x \to \max \\
    \begin{cases}
        Mx = u,\\
        \norm{x}_1 \leq (6/5)^i,\\
        x \in \ZZ_{\geq 0}^n.
    \end{cases}
\end{gather*}
Denote its optimum value by $\operatorname{OPT}(i,u)$, and set
$\operatorname{OPT}(i,u)=-\infty$ if $\mathcal P(i,u)$ is infeasible.
Since some optimal solution of $Mx=v$ has $\ell_1$-norm at most $(6/5)^\rho$, the optimum of $\mathcal P(\rho,v)$ is equal to the optimum of the original problem.

The idea is a divide-and-conquer dynamic program: a solution of $\mathcal P(i,u)$ will be split into two smaller solutions corresponding to level $i-1$ with $u', u'' \approx u/2$.

\proofsubparagraph*{Single splitting step.}
Fix $i\geq 1$ and let $x$ be a feasible solution of $\mathcal P(i,u)$.

\noindent If $\norm{x}_1>1$, then by \Cref{lm:disc_rounding} there exists $z\in\ZZ_{\geq 0}^n$ such that
\begin{gather*}
    0\leq z \leq x,\qquad
\norm{M(z-x/2)}_2 \leq 2\eta, \quad \text{and}\\
\frac{1}{6}\norm{x}_1 \leq \norm{z}_1 \leq \frac{5}{6}\norm{x}_1.
\end{gather*}
Hence
\begin{gather*}
    \norm{z}_1 \leq \frac{5}{6}\norm{x}_1 \leq \frac{5}{6}(6/5)^i = (6/5)^{i-1}, \quad \text{and also}\\
    \norm{x-z}_1 = \norm{x}_1 - \norm{z}_1
\leq \frac{5}{6}\norm{x}_1
\leq (6/5)^{i-1}.
\end{gather*}

\noindent If $\norm{x}_1\leq 1$, then the first part of \Cref{lm:disc_rounding} gives a vector
$z\in\ZZ_{\geq 0}^n$ with
\[
0\leq z\leq x,
\qquad
\norm{M(z-x/2)}_2 \leq \eta \leq 2\eta.
\]
In this case,
\[
\norm{z}_1 \leq \norm{x}_1 \leq 1 \leq (6/5)^{i-1},
\qquad
\norm{x-z}_1 \leq \norm{x}_1 \leq 1 \leq (6/5)^{i-1}.
\]

\noindent Thus, in both cases, we obtain $z$ such that
\[
0\leq z\leq x,\qquad
\norm{M(z-x/2)}_2 \leq 2\eta,
\qquad
\norm{z}_1,\ \norm{x-z}_1 \leq (6/5)^{i-1}.
\]

Now, we set
\[
u' = Mz,
\qquad
u'' = M(x-z)=u-u'.
\]
Then $u',u'' \in B^{-1} \cdot \ZZ^k$, and
\begin{gather*}
    \norm{u'-u/2}_2
=
\norm{Mz-u/2}_2
=
\norm{M(z-x/2)}_2
\leq 2\eta,\\
\norm{u''-u/2}_2
=
\norm{M(x-z)-u/2}_2
=
\norm{M(x/2-z)}_2
=
\norm{Mz-u/2}_2
\leq 2\eta;\\
\text{Thus}\quad u', u''\in u/2+2\eta \cdot \BB_2^k.
\end{gather*}

Since $z$ and $x-z$ satisfy the required norm bound, they are feasible for
$\mathcal P(i-1,u')$ and $\mathcal P(i-1,u'')$, respectively. Therefore, if $x$ is optimal for $\mathcal P(i,u)$, then
\[
\operatorname{OPT}(i,u)
=
c^\top x
=
c^\top z + c^\top(x-z)
\leq
\operatorname{OPT}(i-1,u') + \operatorname{OPT}(i-1,u'').
\]
Conversely, if $x'$ and $x''$ satisfy
\begin{gather*}
    Mx' = u',
\qquad
Mx'' = u'', \\
\text{then}\quad M(x'+x'')=u;
\end{gather*}
Moreover, if
\begin{gather*}
c^\top x'\geq \operatorname{OPT}(i-1,u') \qquad\text{and}\qquad c^\top x''\geq \operatorname{OPT}(i-1,u''),\\
\text{then}\quad c^\top(x'+x'')\geq \operatorname{OPT}(i,u).
\end{gather*}
Hence, to obtain a value at least $\operatorname{OPT}(i,u)$, it is enough to know solutions whose objective values are at least $\operatorname{OPT}(i-1,u')$ and $\operatorname{OPT}(i-1,u'')$; the corresponding witnesses do not need to satisfy the norm bound.

\proofsubparagraph*{Relevant state space.}
For $j\in\{0,\dots,\rho\}$, define
\[
\mathcal U_j
=
\left(\frac{v}{2^j}+4\eta \cdot \BB_2^k\right)\cap B^{-1} \cdot \ZZ^k.
\]
If $u\in\mathcal U_j$, then every child state $u'$ produced by the splitting step belongs to $\mathcal U_{j+1}$, because
\[
\begin{aligned}
u' &\in u/2+2\eta \cdot \BB_2^k \subseteq
\frac{1}{2}\left(\frac{v}{2^j}+4\eta \cdot \BB_2^k\right)+2\eta \cdot \BB_2^k \\
&=
\frac{v}{2^{j+1}}+2\eta \cdot \BB_2^k+2\eta \cdot \BB_2^k \\
&=
\frac{v}{2^{j+1}}+4\eta \cdot \BB_2^k.
\end{aligned}
\]
Thus, if $u\in\mathcal U_j$, then $u'\in\mathcal U_{j+1}$.

We will store a state $u\in\mathcal U_j$ by its integer representative $b' \in \ZZ^k$, given by $B^{-1}b' = u$.
Accordingly, define
\[
\mathcal B_j
=
\left\{\,b'\in\ZZ^k : B^{-1}b' \in \mathcal U_j\,\right\}
=
\ZZ^k \cap \left(\frac{b}{2^j} + 4\eta B \cdot \BB_2^k\right).
\]
By the assumption on $B$, according to \Cref{def:enumerable_matrix}, each set $\mathcal B_j$ can be enumerated in
$\alpha^{k + o(k)} \cdot \bigl(4\eta/\sqrt{k}\bigr)^k\cdot \delta$ operations. In particular,
\[
    \abs{\mathcal B_j} = \alpha^{ k + o(k)} \cdot \left(\frac{4\eta}{\sqrt{k}}\right)^k\cdot \delta.
\]

\proofsubparagraph*{Dynamic programming tables.}
For $i\in\{0,\dots,\rho\}$ and $b'\in\mathcal B_{\rho-i}$, let
\[
u = B^{-1}b'.
\]
In the feasibility and optimization variant we store
\begin{equation*}
    \DP_{\mathrm{feas}}[i,b'] \in \{0,1\} \quad\text{and}\quad \DP[i,b'] \in \ZZ \cup \{-\infty\}, \quad \text{respectively}.
\end{equation*}
In both the optimization and feasibility variants, the $\DP$ entry corresponds to a solution of $Mx=u$. This solution need not be feasible for \(\mathcal P(i,u)\); it may violate the norm bound. However, in the optimization variant we maintain the invariant that its objective value is at least $\operatorname{OPT}(i,u)$.
Precisely, for every state \((i,b')\), we maintain the following invariants:
\begin{itemize}
    \item If \(\mathcal P(i,u)\) is feasible, then
    \begin{align}
        \DP_{\mathrm{feas}}[i,b'] &= 1,
        \notag\\
        \DP[i,b'] &\ge \operatorname{OPT}(i,u).
        \label{eq:first_inv}\tag{INVARIANT-1}
    \end{align}

    \item If a table entry certifies feasibility, then a corresponding solution exists:
    \begin{align}
        \DP_{\mathrm{feas}}[i,b'] = 1
        & \;\Longrightarrow\;
        \exists x\in\ZZ_{\ge 0}^n \text{ s.t. } Mx=u,
        \notag\\
        \DP[i,b'] > -\infty
        & \;\Longrightarrow\;
        \exists x\in\ZZ_{\ge 0}^n \text{ s.t. } Mx=u
        \text{ and } c^\top x = \DP[i,b'].
        \label{eq:second_inv}\tag{INVARIANT-2}
    \end{align}
\end{itemize}

\proofsubparagraph*{Initialization.}
For $i=0$, the bound $\norm{x}_1\leq 1$ leaves only the possibilities $x\in\{0,e_1,\dots,e_n\}$.
We inspect all these vectors and initialize exactly the non-default entries of
$\DP[0,\cdot]$ and $\DP_{\mathrm{feas}}[0,\cdot]$; every remaining state keeps the default value
$-\infty$ or $0$, respectively. This takes only $\poly(\inputsize)$ operations.

\proofsubparagraph*{Transitions.}
For $i\geq 1$ and $b'\in\mathcal B_{\rho-i}$, define
\begin{equation}\label{eq:DP_formula}\tag{Opt-Trans}
\DP[i,b']
=
\max_{\substack{b''\in\mathcal B_{\rho-i+1}\\ b'-b''\in\mathcal B_{\rho-i+1}}}
\Bigl(
\DP[i-1,b''] + \DP[i-1,b'-b'']
\Bigr),
\end{equation}
with the convention that the maximum over the empty set is $-\infty$, and
\begin{equation}\label{eq:DP_feasibility}\tag{Feasibility-Trans}
\DP_{\mathrm{feas}}[i,b']
=
\bigvee_{\substack{b''\in\mathcal B_{\rho-i+1}\\ b'-b''\in\mathcal B_{\rho-i+1}}}
\Bigl(
\DP_{\mathrm{feas}}[i-1,b''] \wedge
\DP_{\mathrm{feas}}[i-1,b'-b'']
\Bigr).
\end{equation}

\proofsubparagraph*{Correctness and Running Time.}

The algorithm's correctness is equivalent to validity of \eqref{eq:first_inv} and \eqref{eq:second_inv}, and it follows from the definitions of subproblems and $\DP[i,b']$. The running time depends on how we evaluate formulas \eqref{eq:DP_formula} and \eqref{eq:DP_feasibility}. We evaluate formula \eqref{eq:DP_formula} using a trivial algorithm. For the formula \eqref{eq:DP_feasibility}, we use the fast sparse Boolean convolution algorithm from \Cref{lm:bool_conv}. Due to space constraints, a detailed proof of correctness and a running-time analysis are deferred to \Cref{th:proof:DP}.
    
\end{proof}

\subsection{Adjusting the Parameters}\label{sec:DP_param_setting}

To prove \Cref{th:KomlosILP}, we will use \Cref{th:DP} with appropriately chosen parameters $\rho$, $\eta$, $B$, and $\delta$.
The parameter $\rho$ can be chosen using a standard argument based on the proximity of the LP and ILP solutions. The details are encapsulated in the following lemma, whose proof is deferred to \Cref{lm:proof:rho_estimate}. Due to this lemma, we can assume that $\rho = O\bigl(\log(k\Delta)\bigr)$.
\begin{restatable}{lemma}{RhoEstimateLm}\label{lm:rho_estimate}
    Any instance of the \eqref{ILP-SF} problem can be transformed to an equivalent instance with the following property: if the problem is feasible and bounded, then there exists an optimal solution $z^*$ such that $\norm{z^*}_1 = (k \cdot \Delta)^{O(1)}$.
\end{restatable}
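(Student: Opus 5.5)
We follow the standard proximity-based reduction of Eisenbrand--Weismantel type, adapted to the parameter $\Delta$. First, we solve the LP relaxation $\max\{c^\top x \colon Ax=b,\ x\ge 0\}$ in polynomial time. If it is infeasible, then so is the ILP. If it is unbounded, then the ILP is either infeasible or unbounded, and we can detect this by a separate feasibility check. Otherwise we obtain an optimal basic solution $x^*$ with basis $\BC$, $\abs{\BC}=k$, so that $x^*_j=0$ for $j\notin\BC$.

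The key ingredient is a proximity bound in terms of $\Delta$: if the ILP is feasible, then there is an optimal integer solution $z^*$ with $\norm{z^*-x^*}_1\le k\cdot\Delta$, or a similar $(k\Delta)^{O(1)}$ bound. We would take this from the Cook--Gerards--Schrijver--Tardos proximity theorem, applied to the system $\{Ax=b,\,x\ge0\}$. Its proof goes through circuits/Graver elements. Every circuit $g$ of $A$ can be written via Cramer's rule with entries that are $k\times k$ minors divided by a gcd, so $\norm{g}_\infty\le\Delta$ and $\abs{\supp g}\le k+1$. The conformal sum decomposition of $z^*-x^*$ into circuits then gives the proximity bound: one shows that at most $k$ circuits, each used with fractional multiplicity less than $1$, are needed, since otherwise one could move $z^*$ toward $x^*$ along a circuit without decreasing the objective. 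The resulting bound is $\norm{z^*-x^*}_\infty\le k\Delta$, hence $\norm{z^*-x^*}_1\le n k\Delta$. Since this carries a factor $n$, we must refine it using the sparse structure.

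To remove $n$, we argue as in Eisenbrand--Weismantel via Steinitz/support bounds. Using an optimal $z^*$ minimizing $\norm{z^*-x^*}_1$, standard support arguments show $\abs{\supp(z^*)\setminus\BC}\le$ something like $2k\log(2k\Delta)$. In fact it suffices to fix variables: for $j\notin\BC$, the bound $z^*_j\le k\Delta$ holds together with a bounded support. The transformation is then the following. Set $\ell=\max\{\lfloor x^*\rfloor - k\Delta\cdot\BUnit,\ \BZero\}$ and substitute $x=\ell+y$, $y\ge0$, with new right-hand side $b-A\ell$. The substituted instance is equivalent, and the solution $y^*=z^*-\ell$ has $y^*_j\le 2k\Delta+1$ on $\BC$ and $y^*_j\le k\Delta$ off $\BC$ on a support of size $(k\log\Delta)^{O(1)}$. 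Summing gives $\norm{y^*}_1=(k\Delta)^{O(1)}$. The matrix $A$ is unchanged, so $k$ and $\Delta$ are preserved.

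The main obstacle is bounding the support of $z^*$ outside $\BC$ independently of $n$. We would handle this with the Eisenbrand--Shmonin-type argument. If more than $k\log_2(c\,k\Delta)$ nonbasic variables were positive, then by pigeonhole two distinct subsets of these columns, with small multiplicities, would have the same $A$-sum modulo the lattice generated by $A_\BC$. The number of residues is at most $\abs{\det A_\BC}\le\Delta$, and the basic coordinates stay within $(k\Delta)^{O(1)}$. This yields an exchange that keeps feasibility and does not decrease the objective, by the LP optimality of $\BC$ through reduced costs, while strictly decreasing the distance to $x^*$. That contradicts the choice of $z^*$. Carrying out this exchange carefully, with the reduced-cost sign argument ensuring that the objective does not drop, is the delicate step.
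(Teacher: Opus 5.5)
Your proposal has a genuine gap: neither of the two $n$-independent estimates it relies on is actually established. First, the bound $\norm{z^*-x^*}_\infty\le k\Delta$ does not follow from the circuit argument you sketch. Write $z^*-x^*=\sum_i\lambda_i g_i$ as a conformal sum of circuits of $A$. Carath\'eodory only bounds the number of circuits by $\dim\ker A=n-k$. The exchange ``if $\lambda_i\ge 1$, replace $z^*$ by $z^*-g_i$'' forces $\lambda_i<1$, but it does nothing to reduce the count to $k$. So this route gives only $\norm{z^*-x^*}_\infty<(n-k)\Delta$, and your estimate $y^*_j\le k\Delta$ for $j\notin\BC$ is unsupported. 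With $(n-k)\Delta$ in its place, even a bounded support would leave $\norm{y^*}_1$ depending on $n$.

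Second, the support exchange fails as described. For nonbasic sets $S_1\ne S_2$ in the same residue class modulo $A_{\BC}\ZZ^k$, the kernel vector is $d=\mathbf{1}_{S_1}-\mathbf{1}_{S_2}-w$, with $w=A_{\BC}^{-1}A(\mathbf{1}_{S_1}-\mathbf{1}_{S_2})$ placed on $\BC$. Then $c^\top d=\bar c(S_1)-\bar c(S_2)$ for the reduced costs $\bar c$. LP optimality only gives $\bar c_j\le 0$ off $\BC$, so this quantity has no sign. Taking a single nonempty zero-sum subset $S_1$ instead (so $S_2=\emptyset$) does fix the sign: $z^*-d=z^*-\mathbf{1}_{S_1}+w$ is at least as good. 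But nothing guarantees that its basic part $z^*_{\BC}+w$ is nonnegative. The entries of $w=A_{\BC}^{-1}A\mathbf{1}_{S_1}$ can be negative and as large as $\abs{S_1}\cdot\Delta$ in absolute value, while $z^*_{\BC}$ may be small. Nonnegativity of the basic variables is exactly what Gomory-type group arguments discard. You offer no substitute, so the ``delicate step'' you flag is left unresolved.

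What is missing is an $\ell_1$ proximity bound that is independent of $n$ and polynomial in $k$ and $\Delta$. Your first sentence of the key step actually asks for this, but the CGST approach cannot supply it. The paper imports exactly such a bound: by \cite[Corollary~2]{OnCanonicalProblems_Grib}, for an optimal LP vertex $v$ there is an optimal integer $\hat z$ with $\norm{v-\hat z}_1\le\chi=O(k^2\Delta\sqrt[k]{\Delta})$. The shift $y_i=\max\{0,\lceil v_i\rceil-\chi\}$ then gives $\hat z\ge y$. Since $v$ has at most $k$ nonzero entries, each within distance $\chi$ of $y_i$,
\[
\norm{\hat z-y}_1\le\norm{\hat z-v}_1+\norm{v-y}_1\le(k+1)\chi=(k\Delta)^{O(1)}.
\]
Your shift by $\ell$ is the same construction. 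Once such an $\ell_1$ bound is in hand, the entire support argument is unnecessary, because $\abs{\supp(\hat z)\setminus\BC}\le\norm{\hat z-v}_1$ holds automatically.
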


Applying \Cref{th:good_B_constructive} to $A$, we obtain a rational regular LD-preconditioner $B$ of $A$ with signature $\bigl(\sqrt{e}, 1, 1\bigr)$. By \Cref{prop:LD_enum_Ell} and according to \Cref{def:enumerable_matrix}, the matrix $B$ is $\alpha$-enumerable with $\alpha = 6 \sqrt{2 \pi e}$.


By the definition of a regular LD-preconditioner with signature $\bigl(\sqrt{e}, 1, 1\bigr)$,
\[
    \delta = \abs{\det B} \leq e^{k/2 + o(k)} \cdot \Delta.
\]
And by \Cref{prop:LD_herdisc}, we can set
\[
    \eta = \lceil \sqrt{k} \rceil.
\]
Now the proof follows from \Cref{th:DP}.



\bibliography{grib_biblio}

\appendix

\section{Omitted Proofs}

\ifarxive

        \subsection{Proof of \Cref{lm:approx_LJ_ell}}\label{lm:proof:approx_LJ_ell}

        Recall the statement of \Cref{lm:approx_LJ_ell}.

        \ApproxLJEllLm*

        \begin{proof}
            A feasible solution $X$ of the optimization problem $\min \bigl\{f_0(X) \colon f_i(X) \leq 0\bigr\}$ is called \emph{$\alpha$-optimal} if $f_0(X) \leq f^* + \alpha$, where $f^*$ is the optimal value of the problem. Using the ellipsoid method \cite{ApplicationsOfEllipsoidMethod}, one can compute an $\alpha$-optimal rational feasible solution $\widehat W$ of \eqref{eq:minell_primal} in time polynomial in the encoding length of $\AC$ and in $\log \alpha^{-1}$. Taking $\alpha = 2\varepsilon$, we obtain
            \[
                -\ln \det \widehat W \leq -\ln \det W + 2\varepsilon,
            \]
            and therefore
            \[
                \det\widehat W^{1/2} \geq e^{-\varepsilon} \cdot \det W^{1/2}.
            \]

            Compute the $L^\top D L$ decomposition
            \[
                \widehat W = L^\top D L,
            \]
            where $L$ is rational upper triangular with unit diagonal and $D$ is diagonal with positive rational entries. This decomposition can be found in polynomial time. Let $\sqrt{D}$ denote the diagonal matrix obtained by taking entrywise square roots of the diagonal entries of $D$.

            For each $i \in \intint k$, compute a rational number $d'_i$ such that
            \[
                e^{-\varepsilon/k}\sqrt{D_{ii}} \leq d'_i \leq \sqrt{D_{ii}}.
            \]
            Let $D' = \diag(d'_1,\dots,d'_k)$. Then, $D'^2 \leq D$ entrywise and
            \[
                \det D' \geq e^{-\varepsilon} \cdot \det \sqrt{D}.
            \]

            \noindent Set
            \[
                C = D' L.
            \]
            Clearly, $C$ is upper triangular with positive diagonal. For every $a \in \AC$, we have
            \[
                a^\top C^\top C a = a^\top L^\top D'^2 L a \leq a^\top L^\top D L a = a^\top \widehat W a \leq 1,
            \]
            because $\widehat W$ is feasible for \eqref{eq:minell_primal}. Using $\det L = 1$, we obtain
            \[
                \det C
                = \det D'
                \geq e^{-\varepsilon} \cdot \det D^{1/2}
                = e^{-\varepsilon} \cdot \det \widehat W^{1/2}
                \geq e^{-2\varepsilon} \cdot \det W^{1/2}.
            \]
        \end{proof}

\fi

        \subsection{Proof of \Cref{lm:ParQEnum}}\label{lm:proof:ParQEnum}

        Recall the statement of \Cref{lm:ParQEnum}.

        \ParQEnumLm*

       \begin{proof}

We enumerate the points of $\PC \cap \ZZ^n$ from the last coordinate to the first. Take
$y \in \PC$, so
\[
y = r + Ht,
\qquad
t \in [0,1)^n.
\]
For $k = n, n-1, \dots, 1$, assume that $y_{k+1}, \dots, y_n$ have already been fixed (for $k=n$ this means that nothing is fixed yet). The corresponding values $t_{k+1}, \dots, t_n$ (such that equalities $y_i = H_{i*} t$ for $y_{k+1}, \dots, y_n$ are satisfied) are known as well. We know that
\[
y_k = r_k + \sum_{j=k}^n H_{k j} t_j = r_k + \sum_{j = k+1}^n H_{k j} t_j + H_{k k} t_k,
\qquad t_k \in [0,1)
\]
Set
\[
\tau_k = r_k + \sum_{j = k+1}^n H_{k j} t_j
\] 
Hence, the admissible values of $y_k$ are exactly the integers in the interval
\[
\tau_k + H_{k k}\cdot [0,1) = [\tau_k, \tau_k + H_{k k}).
\]
This interval has length $H_{k k}$, hence it contains at least $\lfloor H_{k k} \rfloor$ and at most $\lceil H_{k k} \rceil$ integers. These integers can be listed simply by scanning this interval, so this step takes $O\bigl(\lceil H_{k k} \rceil\bigr)$ time. After choosing $y_k$, we recover
\[
t_k = \frac{y_k - \tau_k}{H_{k k}}.
\]

\noindent Thus, this procedure defines a search tree: at step $k$, every node has between $\lfloor H_{k k} \rfloor$ and $\lceil H_{k k} \rceil$ children, and the leaves are exactly the points of $\PC \cap \ZZ^n$. Therefore,
\[
\prod_{i=1}^n \lfloor H_{i i} \rfloor
\le
\abs{\PC \cap \ZZ^n}
\le
\prod_{i=1}^n \lceil H_{i i} \rceil.
\]

\noindent For each such node on level $k$, we spend $O(n)\cdot \lceil H_{kk}\rceil$ operations. 
Hence the total work at level $k$ is $O(n)\cdot \prod_{i=1}^n \lceil H_{ii}\rceil$.
Since there are $n$ levels, the overall number of operations coincides with the stated complexity bound.
\end{proof}

        \subsection{Proof of \Cref{lm:c_opt}}\label{lm:proof:c_opt}

            Recall the statement of \Cref{lm:c_opt}.
            \COptLm*

            \begin{proof}
              Let $c$ be an optimal solution, chosen so that $\sum_{i=1}^n c_i^2$ is minimal among all optimal solutions. Suppose $c_l \neq c_s$ for some $l$ and $s$. Define $c'$ to be the vector obtained from $c$ by replacing both $c_l$ and $c_s$ with $\frac{c_l+c_s}{2}$. Then,
               \begin{multline*}
                   \sum\limits_{\JC \in \binom{\intint n}{k}} \prod\limits_{j \in \JC} c_{j}' - \sum\limits_{\JC \in \binom{\intint n}{k}} \prod\limits_{j \in \JC} c_j = \\
                   \sum\limits_{\JC \in \binom{\intint n \setminus \{ l, s \} }{k - 2}}\left(c'_l c'_s \prod\limits_{j \in \JC} c'_{j} - c_l c_s \prod\limits_{j \in \JC} c_j\right) = \\
                   \left(\biggl(\frac{c_l+c_s}{2} \biggr)^2 - c_l c_s \right) \cdot \sum\limits_{\JC \in \binom{\intint n \setminus \{ l, s \} }{k - 2}}\prod\limits_{j \in \JC} c_j \geq 0.
               \end{multline*}
               Hence the objective value at $c'$ is no smaller than at $c$. Since $c$ is optimal, it follows that $c'$ is optimal as well.
               On the other hand,
                \[
                \sum_{i=1}^n (c'_i)^2 < \sum_{i=1}^n c_i^2,
                \]
                which contradicts the choice of $c$.
            \end{proof}

        \subsection{Proof of \Cref{lm:bool_conv}}\label{lm:proof:bool_conv}

        Recall the statement of \Cref{lm:bool_conv}.
        \BoolConvLm*

        \begin{proof}
            Construct the polynomials $f_{\alpha},f_{\beta} \in \ZZ[x_1, \dots,x_k]$ in the following way. For each $a \in \supp(\alpha)$, we include the monomial $\xB^{a}$ in $f_{\alpha}$. We do the same for each $b \in \supp(\beta)$ to construct $f_{\beta}$. To make the exponents in both polynomials non-negative, we additionally multiply them by $\xB^{L \cdot \BUnit}$. Denoting $g = f_{\alpha} \cdot f_{\beta}$, we have
            \begin{equation*}
                g[\xB^{a+2L \cdot \BUnit}] = \sum\limits_{\substack{b \in \supp(\alpha)\\(a-b) \in \supp(\beta)}} \alpha_b \cdot \beta_{a-b}.
            \end{equation*}
            Since $\gamma_a = 1$ iff $g[\xB^{a+2L \cdot \BUnit}] \neq 0$, the computation of $\gamma$ reduces to multiplying $f_{\alpha}$ and $f_{\beta}$.
        
            By construction, $\max\left\{\norm{f_\alpha}_\infty,\norm{f_\beta}_\infty\right\} = 1$ and $\max_i \deg_{x_i} (g) \leq 4 L$. Therefore, by \Cref{th:poly_mult}, the function $\gamma$ can be computed in
            \begin{equation*}
                O\left(T \cdot (k \log L ) \cdot \polylog(1/\varepsilon) \right)
            \end{equation*}
            bit operations with probability $1-\varepsilon$. Taking $\varepsilon = 1/k^{O(1)}$, we can guarantee the correctness with high probability.
        \end{proof}
        \begin{remark}\label{rem:bool_conv_det}
    There is also a deterministic analogue of \Cref{lm:bool_conv}. Let $q = 4L + 1$ and encode each vector $a = (a_1,\dots,a_k) \in [-L,L]^k \cap \ZZ^k$ by
    \[
        \phi(a) = \sum_{i = 1}^{k} (a_i + L) q^{i-1}.
    \]
    Then for any $a,b \in [-L,L]^k \cap \ZZ^k$, the base-$q$ expansion of $\phi(a) + \phi(b)$ is exactly $a + b + 2L \cdot \BUnit$, because every digit belongs to $\{0,\dots,4L\}$ and hence no carries occur. Therefore, \Cref{prob:bool_conv} reduces to sparse nonnegative convolution of two one-dimensional $0/1$ vectors of length $q^k$.

    Applying the algorithm of Bringmann, Fischer, and Nakos~\cite[Theorem~1]{DetLasVegasSparseNonnegativeConv}, we obtain a running time
    \[
        O\bigl(T \cdot \polylog(q^k\cdot L)\bigr) = O\bigl(T \cdot \poly(k,\log L)\bigr)
    \]
    in the Word-RAM model from \Cref{assumptions_subs}. Hence \Cref{lm:bool_conv} admits a deterministic version with complexity $O\bigl(T \cdot \poly(k,\log L)\bigr)$.
    
    \noindent A similar construction for Boolean convolution is used by Jansen and Rohwedder \cite{DiscConvILP}.
\end{remark}

    \subsection{Proof of \Cref{lm:rho_estimate}}\label{lm:proof:rho_estimate}

            Recall the statement of \Cref{lm:rho_estimate}.
            \RhoEstimateLm*
            
            \begin{proof}
              Let \(v\) be an optimal vertex solution of the relaxation; such a solution can be computed in polynomial time. By \cite[Corollary~2]{OnCanonicalProblems_Grib}, there exists an optimal solution \(\hat z\) to the original problem such that
            \[
            \|v-\hat z\|_1 \le \chi,
            \qquad
            \chi = O\!\left(k^2 \cdot \Delta \cdot \sqrt[k]{\Delta}\right).
            \]
            Following \cite{DiscConvILP}, we apply the change of variables \(x' = x-y\), where
            \[
            y_i = \max\{0,\lceil v_i\rceil-\chi\}.
            \]
            Then \(\hat z \ge y\). Moreover, since \(v\) has at most \(k\) nonzero components, it follows that
            \[
            \|\hat z-y\|_1 = O(k\chi).
            \]
            Under this substitution, the original problem becomes an equivalent instance of \eqref{ILP-SF} with optimal solution \(z^*=\hat z-y\). Therefore,
            \[
            \|z^*\|_1 = O(k\chi) = (k\Delta)^{O(1)}.
            \]
            \end{proof}

    \subsection{Proof of the DP-algorithm's Correctness and Running Time Analysis (\Cref{th:DP}) }\label{th:proof:DP}

    \proofsubparagraph*{Correctness.}
We prove the invariants by induction on $i$. For $i=0$, they hold by the exact initialization. Assume that they hold on level $i-1$, and consider a state
$b'\in\mathcal B_{\rho-i}$ with $u=B^{-1}b'$.

Suppose first that $\mathcal P(i,u)$ is feasible, and let $x$ be an optimal solution.
By the splitting step, there exist $u',u''\in u/2+2\eta \cdot \BB_2^k$ with
$u=u'+u''$, such that $\mathcal P(i-1,u')$ and $\mathcal P(i-1,u'')$ are feasible and
\[
\operatorname{OPT}(i,u)
\leq
\operatorname{OPT}(i-1,u')+\operatorname{OPT}(i-1,u'').
\]
Since $u\in\mathcal U_{\rho-i}$, as shown above
$u',u''\in\mathcal U_{\rho-i+1}$.
Let
\[
b'' = Bu' = Az.
\]
Then $b'',\,b'-b''\in\mathcal B_{\rho-i+1}$, and by the induction hypothesis,
\[
\DP[i-1,b''] \geq \operatorname{OPT}(i-1,u'),
\qquad
\DP[i-1,b'-b''] \geq \operatorname{OPT}(i-1,u''),
\]
as well as
\[
\DP_{\mathrm{feas}}[i-1,b''] =
\DP_{\mathrm{feas}}[i-1,b'-b''] = 1.
\]
Therefore, \eqref{eq:DP_formula} and \eqref{eq:DP_feasibility} imply
\[
\DP[i,b'] \geq \operatorname{OPT}(i,u),
\qquad
\DP_{\mathrm{feas}}[i,b'] = 1.
\]
So \eqref{eq:first_inv} is preserved.

\smallskip

\noindent To show that \eqref{eq:second_inv} is also preserved, assume first that
\[
\DP_{\mathrm{feas}}[i,b'] = 1.
\]
By \eqref{eq:DP_feasibility}, there exists
$b''\in\mathcal B_{\rho-i+1}$ with $b'-b''\in\mathcal B_{\rho-i+1}$ such that
\[
\DP_{\mathrm{feas}}[i-1,b''] =
\DP_{\mathrm{feas}}[i-1,b'-b''] = 1.
\]
By the induction hypothesis, there exist vectors
$x',x''\in\ZZ_{\geq 0}^n$ satisfying
\[
Mx' = B^{-1}b'',
\qquad
Mx'' = B^{-1}(b'-b'').
\]
Hence
\[
M(x'+x'') = B^{-1}b'' + B^{-1}(b'-b'') = B^{-1}b' = u.
\]
Thus $\DP_{\mathrm{feas}}[i,b']=1$ certifies the existence of a feasible solution of $Mx=u$. The same argument applies to $\DP[i,b'] > -\infty$. So \eqref{eq:second_inv} is preserved.

\proofsubparagraph*{Conclusion of correctness.}
Now consider the root state $b$, corresponding to $u=v$.

For the optimization problem, \eqref{eq:first_inv} gives
\[
\DP[\rho,b] \geq \operatorname{OPT}(\rho,v).
\]
On the other hand, if $\DP[\rho,b]>-\infty$, then by \eqref{eq:second_inv} there exists
$x\in\ZZ_{\geq 0}^n$ such that
\[
Mx=v
\qquad\text{and}\qquad
c^\top x = \DP[\rho,b].
\]
Hence $\DP[\rho,b]$ cannot exceed the true optimum of $Mx=v$.
Since some optimal solution has $\ell_1$-norm at most $(6/5)^\rho$, the true optimum is exactly $\operatorname{OPT}(\rho,v)$. Therefore,
\[
\DP[\rho,b] = \operatorname{OPT}(\rho,v),
\]
and the optimization algorithm is correct.

For the feasibility problem, if the instance is feasible, then the assumed optimal solution is feasible for $\mathcal P(\rho,v)$, and \eqref{eq:first_inv} yields
\[
\DP_{\mathrm{feas}}[\rho,b] = 1.
\]
Conversely, if
\[
\DP_{\mathrm{feas}}[\rho,b] = 1,
\]
then \eqref{eq:second_inv} gives a vector $x\in\ZZ_{\geq 0}^n$ with $Mx=v$.
Thus
\[
\DP_{\mathrm{feas}}[\rho,b] = 1
\quad\Longleftrightarrow\quad
Mx=v \text{ is feasible},
\]
and the feasibility algorithm is correct as well.

\proofsubparagraph*{Running time.}
As noted above, every set $\mathcal B_j$ has size
\[
    \abs{\mathcal B_j} = \alpha^{k + o(k)} \cdot \left(\frac{4\eta}{\sqrt{k}}\right)^k \cdot \delta.
\]

\noindent For the optimization table, a single level is computed by the trivial quadratic scan in
\eqref{eq:DP_formula}, so it costs
\[
    \alpha^{2 k + o(k)}\cdot \left(\frac{4 \eta}{\sqrt{k}}\right)^{2k} \cdot \delta^2.
\]
arithmetic operations.
Summed over all $\rho$ levels, and including the initialization, this gives
\[
\rho \cdot \alpha^{2 k + o(k)} \cdot \left(\frac{4 \eta}{\sqrt{k}}\right)^{2k} \cdot \delta^2 + \poly(\inputsize)
\]
arithmetic operations.



Applying \Cref{lm:bool_conv}, one level of the feasibility dynamic program can be computed in
\[
    \alpha^{ k + o(k)} \cdot \left(\frac{4 \eta}{\sqrt{k}}\right)^k \cdot \delta \cdot \poly(\inputsize, \log \eta)
\]
bit operations.
Over all $\rho$ levels, the total running time is
\[
\rho \cdot \alpha^{k + o(k)} \cdot \left(\frac{4 \eta}{\sqrt{k}}\right)^k \cdot \delta \cdot \poly(\inputsize, \log \eta)
\]
bit operations.

\end{document}